\title{The impact of startup costs and the grid operator on the power price equilibrium\thanks{We thank the Oxford-Man Institute for providing historical prices used to calibrate our model, and ELEXON for providing historical data about the Balancing Mechanism used to determine physical characteristics of the power plants connected to the UK power grid. }} 
\author{Miha Troha\thanks{Mathematical Institute, Oxford University, Andrew Wiles Building, Radcliffe Observatory Quarter, Woodstock Road, Oxford OX2 6GG, United Kingdom, \email{troha@maths.ox.ac.uk}. This author was supported through grants from the Slovene human resources development and scholarship fund, and the Oxford-Man Institute.} \and Raphael Hauser\thanks{Mathematical Institute, Oxford University, Andrew Wiles Building, Radcliffe Observatory Quarter, Woodstock Road, Oxford OX2 6GG, United Kingdom, \email{hauser@maths.ox.ac.uk}. Associate Professor in Numerical Mathematics, and Tanaka Fellow in Applied Mathematics at Pembroke College, Oxford. This author was supported through grant EP/H02686X/1 from the Engineering and Physical Sciences Research Council of the UK.}}
\tikzset{decorate sep/.style 2 args=
{decorate,decoration={shape backgrounds,shape=circle,shape size=#1,shape sep=#2}}}
\begin{document}
\maketitle
\begin{abstract}
In this paper we propose a quadratic programming model that can be
used for calculating the term structure of electricity prices while
explicitly modeling startup costs of power plants. In contrast to
other approaches presented in the literature, we incorporate the startup
costs in a mathematically rigorous manner without relying on ad hoc
heuristics. Moreover, we propose a tractable approach for estimating
the startup costs of power plants based on their historical production.
Through numerical simulations applied to the entire UK power grid,
we demonstrate that the inclusion of startup costs is necessary for
the modeling of electricity prices in realistic power systems. Numerical
results show that startup costs make electricity prices very spiky.
In the second part of the paper, we extend the initial model by including
the grid operator who is responsible for managing the grid. Numerical
simulations demonstrate that robust decision making of the grid operator
can significantly decrease the number and severity of spikes in the
electricity price and improve the reliability of the power grid.
\end{abstract}
\begin{keywords}{\footnotesize term structure, quadratic programming,
game theory, mean-variance, startup costs, KKT conditions.}\end{keywords}

\section{Introduction}

More than two decades ago, electricity markets started the transition
from a regulated market with a single utility company to a fully competitive
market. This introduced a need for a development of financial models
that would help us to understand the behavior of electricity prices
and manage the risk. High uncertainty in the electricity demand and
fuel prices requires robust models, so that low electricity prices
and a reliable delivery of electricity can be achieved.

Electricity markets are changing extremely quickly, often faster than
any other financial markets. High pressure on decarbonization has
led to new market design and policies. New, intermittent, renewable
sources are connected to the electricity grid almost on a daily basis.
Smart grids, together with the battery storage and demand response,
are making their way into market. All these inventions have an impact
on the electricity price and its behavior. The pace of new inventions
makes the risk management and decision making in electricity markets
very challenging.

In the literature, there exist three approaches to the modeling of
electricity prices. Traditionally, electricity prices have been modeled
by so called non-structural approaches. These approaches attempt to
model electricity prices directly without explicitly considering the
fundamental factors that drive such behavior. \cite{lucia2000electricity}
investigated the statistical properties of the electricity prices
at the Nordic Power Exchange. The suitability of one and multi-factor
Ornstein-Uhlenbeck processes for modeling the spot as well as the
log spot price was examined. As pointed out in this work, none of
these models are able to capture the spikes in the electricity price.
Thus, various other models that combine the Ornstein-Uhlenbeck process
with a pure jump-process (see \cite{hambly2009modelling} for example)
or more general Levy process (see \cite{meyer-brandis2008multifactor}
and \cite{garcia2011estimation} for example) were proposed. A one-factor
model in \cite{clewlow1999valuing} and multi-factor term-structure
model in \cite{clewlow1999amultifactor} are the first that produce
prices that are consistent with observable forward prices. While non-structural
models are widely used for the short-term risk management as well
as electricity derivatives pricing purposes in practice, they do not
cater well for longer-term modeling purposes, where the impact of
new inventions must be included. They must be frequently recalibrated
to reflect the changes in the markets. 

Structural approaches for modeling electricity prices capture some
of the fundamental factors of the electricity market. The supply and
demand stack was first used to model electricity prices in \cite{barlow2002adiffusion}.
This idea was extended by \cite{howison2009stochastic} and by \cite{carmona2013electricity},
where an exponential supply and demand stack was modeled as a function
of the underlying fuels such as natural gas and coal.

The third, game theoretic, approach models the electricity market
even more closely. The disastrous events that happened in California
in 2001 confirmed that some physical properties of power plants such
as ramp-up and ramp-down constants, and market design together with
the transmission lines play a vital role in the behavior of electricity
prices. The first game theoretic model for modeling the electricity
prices was proposed in \cite{bessembinder2002equilibrium}, where
a unique relation between a forward and a spot price is given in a
two-stage market with one producer and one consumer, who each want
to maximize their mean-variance objective function. This model was
extended to a multistage setting in \cite{buhler2009valuation} and
\cite{buhler2009riskpremia}, and to any convex risk measure in \cite{demaeredaertrycke2012liquidity}.
\cite{troha2014theexistence} further extended the work of \cite{buhler2009riskpremia}
to a setting with more than one producer and consumer, who optimize
their mean-variance objective functions. In contrast to other game
theoretic models, capacity and ramp-up and ramp-down constraints of
power plants are included. By modeling the profit of power plants
as a difference between the power price and fuel costs together with
emissions obligations, this work also incorporates ideas from the
structural approach. As in \cite{clewlow1999amultifactor} and \cite{clewlow1999valuing},
the model is consistent with observable fuel and emission prices.
\cite{troha2014calculation} applied this model to calculate the electricity
prices in the UK by taking into account the entire power grid consisting
of a few hundred power plants. Numerical simulations show that this
model has a tendency to underestimate spot prices during the peak
hours and to overestimate them during the off-peak hours. It is argued
that this may occur because startup costs are not included in the
model.

In this paper, we extend the model presented in \cite{troha2014calculation}
and include the startup costs. Various methodologies have already
been proposed on how to include the startup costs (see \cite{martinez2008amethodology},
\cite{gribik2007marketclearing} and \cite{zhang2009onreducing} for
example). Most of them rely on a price uplift approach, where first
the power price without startup costs is calculated. This price is
then uplifted to reflect the startup costs. In our model, the startup
costs are included in a mathematically rigorous fashion without relying
on the uplift heuristic. 

We show that startup costs are responsible for introducing many spikes
in spot electricity prices. To reduce the number of spikes, we include
the grid operator, who is responsible for managing the grid and for
a reliable delivery of electricity, by enhancing our model in the
second part of the paper.

This paper is organized as follows: In Section \ref{sec:Problem-description}
we give a detailed mathematical description of the model, and in Section
\ref{sec:N1} we present the numerical results. Numerical results
motivate us to introduce the grid operator in Section \ref{sec:Grid-operator}.
We conclude the paper in Section \ref{sec:Conclusions}.

\section{Problem description\label{sec:Problem-description}}

In this section we provide a detailed description of a model that
we use for the purpose of modeling the term structure of electricity
prices. The model belongs to a class of game theoretic equilibrium
models. Market participants are divided into consumers and producers.
A set of consumers is denoted by $C$ and has cardinality $0<\left|C\right|<\infty$.
Similarly, a set of producers is denoted by $P$ and has cardinality
$0<\left|P\right|<\infty$. Each producer owns a portfolio of power
plants that can have different characteristics such as capacity, startup
costs, ramp-up and ramp-down constraints, efficiency, and fuel type.
The set of all fuel types is denoted by $L$. Sets $R^{p,l}$ denote
all power plants owned by producer $p\in P$ that run on fuel $l\in L$.
A set $R^{p,l}$ may be empty since each producer typically does not
own all possible types of power plants. Moreover, this allows us to
include non physical traders such as banks or speculators, who do
not own any electricity generation facilities and are without a physical
demand for electricity, as producers $p\in P$ with $R^{p,l}=\left\{ \right\} $
for all $l\in L$. 

As we will see in Section \ref{sub:The-hypothetical-market}, it is
useful to introduce another player named the hypothetical market agent
besides producers and consumers. The hypothetical market agent plays
the role of the electricity market and ensures that the term structure
of the electricity price is such that the market clearing condition
is satisfied for all electricity forward contracts.

We are interested in delivery times $T_{j}$, $j\in J=\left\{ 1,...,T'\right\} $,
where power for each delivery time $T_{j}$ can be traded through
numerous forward contracts at times $t_{i}$, $i\in I_{j}$. The electricity
price at time $t_{i}$ for delivery at time $T_{j}$ is denoted by
$\Pi\left(t_{i},T_{j}\right)$. Since contracts with trading time
later than delivery time do not exist, we require $t_{\max\left\{ I_{j}\right\} }=T_{j}$
for all $j\in J$. The number of all forward contracts, i.e. $\sum_{j\in J}\left|I_{j}\right|$,
is denoted by $N$. Uncertainty is modeled by a filtered probability
space $\left(\Omega,\mathcal{F},\mathbb{F}=\left\{ \mathcal{F}_{t},t\in I\right\} ,\mathbb{P}\right)$,
where $I=\cup_{j\in J}I_{j}$. The $\sigma$-algebra $\mathcal{F}_{t}$
represents information available at time $t$.

The exogenous variables that appear in our model are (a) aggregate
power demand $D\left(T_{j}\right)$ for each delivery period $j\in J$,
(b) prices of fuel forward contracts $G_{l}\left(t_{i},T_{j}\right)$
for each fuel $l\in L$, delivery period $j\in J$, and trading period
$i\in I_{j}$, and (c) prices of emissions forward contracts $G_{em}\left(t_{i},T_{j}\right)$,
$j\in J$, $i\in I_{j}$. Electricity prices and all exogenous variables
are assumed to be adapted to the filtration $\left\{ \mathcal{F}_{t}\right\} _{t\in I}$
and have finite second moments.

Let $v_{k}\in\mathbb{R}^{n_{k}}$, $n_{k}\in\mathbb{N}$, $k\in K$,
and $K=\left\{ 1,...,\left|K\right|\right\} $ be given vectors. For
convenience, we define a vector concatenation operator as
\[
\left|\right|_{k\in K}v_{k}=\left[v_{1}^{\top},...,v_{\left|K\right|}^{\top}\right]^{\top}.
\]

\subsection{Producers\label{sub:Producer}}

Each producer $p\in P$ participates in the electricity, fuel, and
emission markets. Forward as well as spot contracts are available
on all markets. Electricity prices, fuel prices, and emission prices
are denoted by $\Pi\left(t_{i},T_{j}\right)$, $G_{l}\left(t_{i},T_{j}\right)$
where $l\in L$, and $G_{em}\left(t_{i},T_{j}\right)$, respectively. 

A producer may participate in the market by buying and selling forward
and spot contracts. The number of electricity forward contracts that
producer $p\in P$ buys at trading time $t_{i}$, $i\in I_{j}$ for
delivery at time $T_{j}$, $j\in J$ is denoted by $V_{p}\left(t_{i},T_{j}\right)$.
Similarly, the number of fuel and emission forward contracts that
producer $p\in P$ buys at trading time $t_{i}$, $i\in I_{j}$ for
delivery at time $T_{j}$, $j\in J$ is denoted by $F_{p,l}\left(t_{i},T_{j}\right)$,
$l\in L$ and $O_{p}\left(t_{i},T_{j}\right)$, respectively. Producers
own a generally non-empty portfolio of power plants. The actual production
of electricity from power plant $r\in R^{p,l}$ at delivery time $T_{j}$,
$j\in J$ is denoted by $\widehat{W}_{p,l,r}\left(T_{j}\right)$.

\subsubsection{Production variables}

In this section we investigate the production of power plants more
closely. Each power plant $r\in R^{p,l}$, $p\in P$, $l\in L$ has
a maximum export limit and minimum stable limit denoted by $\overline{W}_{max}^{p,l,r}\left(T_{j}\right)$
and $\overline{W}_{min}^{p,l,r}\left(T_{j}\right)$, respectively.
The maximum export limit defines the maximum production capacity of
a power plant and the minimum stable limit defines the minimum production
that a power plant is able to maintain for a longer period of time.
We allow each of the parameters to be time dependent to account for
the maintenance of power plants. 

Stable production of each power plant must satisfy
\begin{equation}
\widehat{W}_{p,l,r}\left(T_{j}\right)\in\left\{ 0\right\} \cup\left[\overline{W}_{min}^{p,l,r}\left(T_{j}\right),\overline{W}_{max}^{p,l,r}\left(T_{j}\right)\right]\label{eq:capacity-1}
\end{equation}
for each $j\in J$. It is allowed for a power plant to have production
$\widehat{W}_{p,l,r}\left(T_{j}\right)\in\left(0,\overline{W}_{min}^{p,l,r}\left(T_{j}\right)\right)$
for a very short period of time (i.e. during a ramp-up and ramp-down
phase). To formulate these constraints in an optimization framework,
we introduce new decision variables $W_{p,l,r}^{\left(k\right)}\left(T_{j}\right)$,
$k\in\left\{ 1,...,6\right\} $ with the following meaning:
\begin{itemize}
\item $W_{p,l,r}^{\left(1\right)}\left(T_{j}\right)$, $j\in J$ is a continuous
variable that is $1$ if the power plant is fully ramped up at time
$T_{j}$ and $0$ if the power plant is not producing at all at time
$T_{j}$. If $W_{p,l,r}^{\left(1\right)}\left(T_{j}\right)\in\left(0,1\right)$
then the power plant is in the ramp-up or ramp-down phase. In an optimization
framework, $W_{p,l,r}^{\left(1\right)}\left(T_{j}\right)$ is defined
as
\begin{equation}
W_{p,l,r}^{\left(1\right)}\left(T_{j}\right)\in\left[0,1\right].\label{eq:w1}
\end{equation}

\item $W_{p,l,r}^{\left(2\right)}\left(T_{j}\right)$, $j\in J$ is a binary
variable that is $1$ if the power plant is fully ramped up at time
$T_{j}$ and $0$ otherwise. In an optimization framework, $W_{p,l,r}^{\left(2\right)}\left(T_{j}\right)$
is defined as
\begin{equation}
\begin{array}{c}
W_{p,l,r}^{\left(2\right)}\left(T_{j}\right)\leq W_{p,l,r}^{\left(1\right)}\left(T_{j}\right)\\
\\
W_{p,l,r}^{\left(2\right)}\left(T_{j}\right)\in\left[0,1\right]
\end{array}\label{eq:w2_1}
\end{equation}
and 
\begin{equation}
W_{p,l,r}^{\left(2\right)}\left(T_{j}\right)\in\mathbb{Z}.\label{eq:w2_2_b}
\end{equation}

\item $W_{p,l,r}^{\left(3\right)}\left(T_{j}\right)$, $j\in J\backslash\left\{ 1\right\} $
is a continuous variable that denotes the increase of $W_{p,l,r}^{\left(1\right)}\left(T_{j}\right)$
from time $T_{j-1}$ to time $T_{j}$. In an optimization framework,
$W_{p,l,r}^{\left(3\right)}\left(T_{j}\right)$ is defined as
\begin{equation}
W_{p,l,r}^{\left(3\right)}\left(T_{j}\right)\geq W_{p,l,r}^{\left(1\right)}\left(T_{j}\right)-W_{p,l,r}^{\left(1\right)}\left(T_{j-1}\right)\label{eq:w3_1}
\end{equation}
and 
\begin{equation}
W_{p,l,r}^{\left(3\right)}\left(T_{j}\right)\in\left[0,1\right].\label{eq:w3_2}
\end{equation}
 
\item $W_{p,l,r}^{\left(4\right)}\left(T_{j}\right)$, $j\in J\backslash\left\{ 1\right\} $
is a binary variable that is $1$ if the power plant is in the ramp-up
phase and $0$ otherwise. In an optimization framework, $W_{p,l,r}^{\left(4\right)}\left(T_{j}\right)$
is defined as
\begin{equation}
\begin{array}{c}
W_{p,l,r}^{\left(4\right)}\left(T_{j}\right)\geq W_{p,l,r}^{\left(1\right)}\left(T_{j}\right)-W_{p,l,r}^{\left(1\right)}\left(T_{j-1}\right)\\
\\
W_{p,l,r}^{\left(4\right)}\left(T_{j}\right)\in\left[0,1\right]
\end{array}\label{eq:w4_1}
\end{equation}
and
\begin{equation}
W_{p,l,r}^{\left(4\right)}\left(T_{j}\right)\in\mathbb{Z}.\label{eq:w4_2_b}
\end{equation}

\item $W_{p,l,r}^{\left(5\right)}\left(T_{j}\right)$, $j\in J\backslash\left\{ 1\right\} $
is a binary variable that is $1$ if the power plant is in the ramp-down
phase and $0$ otherwise. In an optimization framework, $W_{p,l,r}^{\left(5\right)}\left(T_{j}\right)$
is defined as
\begin{equation}
\begin{array}{c}
W_{p,l,r}^{\left(5\right)}\left(T_{j}\right)\geq W_{p,l,r}^{\left(1\right)}\left(T_{j-1}\right)-W_{p,l,r}^{\left(1\right)}\left(T_{j}\right)\\
\\
W_{p,l,r}^{\left(5\right)}\left(T_{j}\right)\in\left[0,1\right]
\end{array}\label{eq:w5_1}
\end{equation}
and
\begin{equation}
W_{p,l,r}^{\left(5\right)}\left(T_{j}\right)\in\mathbb{Z}.\label{eq:w5_2_b}
\end{equation}

\item $W_{p,l,r}^{\left(6\right)}\left(T_{j}\right)$, $j\in J$ is a continuous
variable such that 
\begin{equation}
\widehat{W}_{p,l,r}\left(T_{j}\right)=W_{p,l,r}^{\left(1\right)}\left(T_{j}\right)\overline{W}_{min}^{p,l,r}\left(T_{j}\right)+W_{p,l,r}^{\left(6\right)}\left(T_{j}\right)\left(\overline{W}_{max}^{p,l,r}\left(T_{j}\right)-\overline{W}_{min}^{p,l,r}\left(T_{j}\right)\right)\label{eq:prod_vol-1}
\end{equation}
where
\begin{equation}
W_{p,l,r}^{\left(6\right)}\left(T_{j}\right)\leq W_{p,l,r}^{\left(2\right)}\left(T_{j}\right)\label{eq:w6_1}
\end{equation}
and
\begin{equation}
W_{p,l,r}^{\left(6\right)}\left(T_{j}\right)\in\left[0,1\right].\label{eq:w6_2}
\end{equation}

\end{itemize}
Variable $W_{p,l,r}^{\left(1\right)}\left(T_{j}\right)$ tells us
whether the power plant is running at time $T_{j}$. If the power
plant is not running at time $T_{j}$, then by (\ref{eq:w2_1}) and
(\ref{eq:w6_1}), $W_{p,l,r}^{\left(6\right)}\left(T_{j}\right)=0$
and by (\ref{eq:prod_vol-1}) also $\widehat{W}_{p,l,r}\left(T_{j}\right)=0$.
On the other hand, if the power plant is fully ramped up time $T_{j}$,
then $W_{p,l,r}^{\left(1\right)}\left(T_{j}\right)=1$ and $W_{p,l,r}^{\left(6\right)}\left(T_{j}\right)\in\left[0,1\right]$,
and thus $\widehat{W}_{p,l,r}\left(T_{j}\right)\in\left[\overline{W}_{min}^{p,l,r}\left(T_{j}\right),\overline{W}_{max}^{p,l,r}\left(T_{j}\right)\right]$.

\subsubsection{Maximum ramp-up and maximum ramp-down constraints\label{sub:Maximum-ramp-up}}

Producer $p\in P$ is not able to arbitrarily choose her decision
variables because there are some constraints that limit her feasible
set. The change in production of each power plant from one delivery
period to next is limited by the ramp-up and ramp-down constraints.
For each $j\in\left\{ 1,...,T'-1\right\} $, where $T'$ denotes the
last delivery period, $l\in L$ and $r\in R^{p,l}$ these constraints
can be expressed as 
\begin{equation}
\triangle\overline{W}_{min}^{p,l,r}\left(T_{j}\right)\leq\widehat{W}_{p,l,r}\left(T_{j+1}\right)-\widehat{W}_{p,l,r}\left(T_{j}\right)\leq\triangle\overline{W}_{max}^{p,l,r}\left(T_{j}\right),\label{eq:pb0}
\end{equation}
where $\triangle\overline{W}_{max}^{p,l,r}$ and $\triangle\overline{W}_{min}^{p,l,r}$
represent maximum rates for ramping up and down, respectively. The
ramping rates highly depend on the type of the power plant. Some gas
power plants can increase production from zero to the maximum in just
a few minutes, while the same action may take days or weeks for a
nuclear power plant.

Using (\ref{eq:prod_vol-1}), we can rewrite Constraint (\ref{eq:pb0})
for all $j\in\left\{ 1,...,T'-1\right\} $ as 
\begin{equation}
\begin{array}{rcl}
\overline{W}_{min}^{p,l,r}\left(T_{j}\right) & \leq & W_{p,l,r}^{\left(1\right)}\left(T_{j+1}\right)\overline{W}_{min}^{p,l,r}\left(T_{j+1}\right)+W_{p,l,r}^{\left(6\right)}\left(T_{j+1}\right)\left(\overline{W}_{max}^{p,l,r}\left(T_{j+1}\right)-\overline{W}_{min}^{p,l,r}\left(T_{j+1}\right)\right)\\
\\
 &  & -W_{p,l,r}^{\left(1\right)}\left(T_{j}\right)\overline{W}_{min}^{p,l,r}\left(T_{j}\right)-W_{p,l,r}^{\left(6\right)}\left(T_{j}\right)\left(\overline{W}_{max}^{p,l,r}\left(T_{j}\right)-\overline{W}_{min}^{p,l,r}\left(T_{j}\right)\right)\\
\\
 & \leq & \triangle\overline{W}_{max}^{p,l,r}\left(T_{j}\right).
\end{array}\label{eq:ramp}
\end{equation}
Additionally, if the power plant is in a ramp-up phase, then it has
to increase production and finish the ramp-up phase as fast as possible.
Such a requirement can be enforced as
\begin{equation}
W_{p,l,r}^{\left(1\right)}\left(T_{j+1}\right)\geq\min\left\{ W_{p,l,r}^{\left(1\right)}\left(T_{j}\right)+\frac{\triangle\overline{W}_{max}^{p,l,r}\left(T_{j}\right)}{\overline{W}_{min}^{p,l,r}\left(T_{j}\right)},1\right\} 
\end{equation}
where $j\in\left\{ 1,...,T'-1\right\} $. Since this constraint is
relevant only during the ramp-up phase, we reformulate it for $j\in\left\{ 1,...,T'-1\right\} $
as
\begin{equation}
W_{p,l,r}^{\left(1\right)}\left(T_{j+1}\right)\geq\min\left\{ W_{p,l,r}^{\left(1\right)}\left(T_{j}\right)+\frac{\triangle\overline{W}_{max}^{p,l,r}\left(T_{j}\right)}{\overline{W}_{min}^{p,l,r}\left(T_{j}\right)},1\right\} -M_{1}\left(1-W_{p,l,r}^{\left(4\right)}\left(T_{j}\right)\right),\label{eq:rampup_fast}
\end{equation}
where $M_{1}\geq1+\frac{\triangle\overline{W}_{max}^{p,l,r}\left(T_{j}\right)}{\overline{W}_{min}^{p,l,r}\left(T_{j}\right)}$.
Most of the available optimization solvers are not able to handle
constraints that include min or max functions. Thus, we apply a well
established approach to handle logical constraints, and introduce
a new binary decision variable $W_{p,l,r}^{\left(7\right)}\left(T_{j}\right)$
as 
\begin{equation}
W_{p,l,r}^{\left(7\right)}\left(T_{j}\right)\in\left[0,1\right]\label{eq:bounds_w7}
\end{equation}
and
\begin{equation}
W_{p,l,r}^{\left(7\right)}\left(T_{j}\right)\in\mathbb{Z}\label{eq:b_7}
\end{equation}
where $j\in J$, that makes sure that at least one of the following
constraints
\begin{equation}
W_{p,l,r}^{\left(1\right)}\left(T_{j+1}\right)\geq W_{p,l,r}^{\left(1\right)}\left(T_{j}\right)+\frac{\triangle\overline{W}_{max}^{p,l,r}\left(T_{j}\right)}{\overline{W}_{min}^{p,l,r}\left(T_{j}\right)}-M_{1}\left(1-W_{p,l,r}^{\left(4\right)}\left(T_{j}\right)\right)-M_{2}W_{p,l,r}^{\left(7\right)}\left(T_{j}\right)\label{eq:ramp_up_eq1}
\end{equation}
and
\begin{equation}
W_{p,l,r}^{\left(1\right)}\left(T_{j+1}\right)\geq1-M_{1}\left(1-W_{p,l,r}^{\left(4\right)}\left(T_{j}\right)\right)-M_{2}\left(1-W_{p,l,r}^{\left(7\right)}\left(T_{j}\right)\right),\label{eq:ramp_up_eq2}
\end{equation}
where $M_{2}\geq1$, is enforced. 

Similarly, if a power plant is in the ramp-down phase, then it has
to decrease production and finish the ramp-down phase as fast as possible.
Such requirement can be enforced as
\begin{equation}
W_{p,l,r}^{\left(1\right)}\left(T_{j+1}\right)\leq\text{max}\left\{ W_{p,l,r}^{\left(1\right)}\left(T_{j}\right)-\frac{\triangle\overline{W}_{min}^{p,l,r}\left(T_{j}\right)}{\overline{W}_{min}^{p,l,r}\left(T_{j}\right)},0\right\} +M_{1}\left(1-W_{p,l,r}^{\left(5\right)}\left(T_{j}\right)\right)\label{eq:rampdown_fast}
\end{equation}
for $j\in\left\{ 1,...,T'-1\right\} $. Most of the available optimization
solvers are not able to handle constraints that include min or max
functions. We apply the approach described above and introduce a new
binary decision variable $W_{p,l,r}^{\left(8\right)}\left(T_{j}\right)$
as 
\begin{equation}
W_{p,l,r}^{\left(8\right)}\left(T_{j}\right)\in\left[0,1\right]\label{eq:bounds_w8}
\end{equation}
and
\begin{equation}
W_{p,l,r}^{\left(8\right)}\left(T_{j}\right)\in\mathbb{Z}\label{eq:b_8}
\end{equation}
where $j\in J$, that makes sure that at least one of the following
constraints
\begin{equation}
W_{p,l,r}^{\left(1\right)}\left(T_{j+1}\right)\leq W_{p,l,r}^{\left(1\right)}\left(T_{j}\right)-\frac{\triangle\overline{W}_{min}^{p,l,r}\left(T_{j}\right)}{\overline{W}_{min}^{p,l,r}\left(T_{j}\right)}+M_{1}\left(1-W_{p,l,r}^{\left(5\right)}\left(T_{j}\right)\right)+M_{2}W_{p,l,r}^{\left(8\right)}\left(T_{j}\right)\label{eq:ramp_down_eq1}
\end{equation}
and
\begin{equation}
W_{p,l,r}^{\left(1\right)}\left(T_{j+1}\right)\leq M_{1}\left(1-W_{p,l,r}^{\left(5\right)}\left(T_{j}\right)\right)+M_{2}\left(1-W_{p,l,r}^{\left(8\right)}\left(T_{j}\right)\right)\label{eq:ramp_down_eq2}
\end{equation}
is enforced.

\subsubsection{Other inequality constraints}

We bound the the number of electricity contracts that each producer
is allowed to trade as 
\begin{equation}
-V_{trade}\leq V_{p}\left(t_{i},T_{j}\right)\leq V_{trade}\label{eq:pb2}
\end{equation}
for some large $V_{trade}>0$. Trading of an infinite number of contracts
would clearly lead to a bankruptcy of one of the \foreignlanguage{british}{counterparties}
involved and must thus be prevented. In \cite{troha2014theexistence}
it was shown, that if $V_{trade}$ is chosen to be large enough, then
Constraint (\ref{eq:pb2}) has no impact on the optimal solution and
can be eliminated from the problem.

\subsubsection{Equality constraints}

There are also equality constraints that connect power plant production
with electricity, fuel, and emission trading. For each $j\in J$ the
electricity sold in the forward and spot market together must equal
the actually produced electricity, i.e.
\begin{equation}
-\sum_{i\in I_{j}}V_{p}\left(t_{i},T_{j}\right)=\sum_{l\in L}\sum_{r\in R^{p,l}}\widehat{W}_{p,l,r}\left(T_{j}\right).\label{eq:pb2-1}
\end{equation}
Each producer $p\in P$ has to make sure that a sufficient amount
of fuel $l\in L$ has been bought to cover the electricity production
for each delivery period $j\in J$ . Such constraint can be expressed
as 
\begin{equation}
\sum_{r\in R^{p,l}}\widehat{W}_{p,l,r}\left(T_{j}\right)c^{p,l,r}=\sum_{i\in I_{j}}F_{p,l}\left(t_{i},T_{j}\right)\label{eq:pb3-1}
\end{equation}
where $c^{p,l,r}>0$ is the efficiency of power plant $r\in R^{p,l}$. 

The carbon emission obligation constraint can be written as
\begin{equation}
\sum_{j\in J}\sum_{i\in I_{j}}O\left(t_{i},T_{j}\right)=\sum_{j\in J}\sum_{l\in L}\sum_{r\in R^{p,l}}\widehat{W}_{p,l,r}\left(T_{j}\right)g^{p,l,r},\label{eq:pb4}
\end{equation}
where $g^{p,l,r}>0$ denotes the carbon emission intensity factor
for power plant $r\in R^{p,l}$. This constraint ensures that enough
emission certificates have been bought to cover the electricity production
over the whole planning horizon.

\subsubsection{Producers' optimization problem}

The notation of the decision variables is greatly simplified if they
are concatenated into 
\begin{itemize}
\item electricity trading vectors $V_{p}\left(T_{j}\right)=\left|\right|_{i\in I_{j}}V_{p}\left(t_{i},T_{j}\right)$
and $V_{p}=\left|\right|_{j\in J}V_{p}\left(T_{j}\right)$, 
\item fuel trading vectors $F_{p}\left(t_{i},T_{j}\right)=\left|\right|_{l\in L}F_{p,l}\left(t_{i},T_{j}\right)$,
$F_{p}\left(T_{j}\right)=\left|\right|_{i\in I_{j}}F_{p}\left(t_{i},T_{j}\right)$,
and\linebreak{}
 $F_{p}=\left|\right|_{j\in J}F_{p}\left(T_{j}\right)$, 
\item emission trading vectors $O_{p}\left(T_{j}\right)=\left|\right|_{i\in I_{j}}O_{p}\left(t_{i},T_{j}\right)$
and $O_{p}=\left|\right|_{j\in J}O_{p}\left(T_{j}\right)$, 
\item electricity production vectors $W_{p,l,r}\left(T_{j}\right)=\left|\right|_{k\in\left\{ 1,..,8\right\} }W_{p,l,r}^{\left(k\right)}\left(T_{j}\right)$,\linebreak{}
 $W_{p,l}\left(T_{j}\right)=\left|\right|_{r\in R^{p,l}}W_{p,l,r}\left(T_{j}\right)$,
$W_{p}\left(T_{j}\right)=\left|\right|_{l\in L}W_{p,l}\left(T_{j}\right)$,
and $W_{p}=\left|\right|_{j\in J}W_{p}\left(T_{j}\right)$, 
\end{itemize}
and finally $v_{p}=\left[V_{p}^{\top},F_{p}^{\top},O_{p}^{\top},W_{p}^{\top}\right]^{\top}$.

Similarly, the notation of the prices is greatly simplified if they
are concatenated into 
\begin{itemize}
\item electricity price vectors $\Pi\left(T_{j}\right)=\left|\right|_{i\in I_{j}}\Pi\left(t_{i},T_{j}\right)$,
and $\Pi=\left|\right|_{j\in J}e^{-\hat{r}T_{j}}\Pi\left(T_{j}\right)$,
where $\hat{r}\in\mathbb{R}$ is a constant interest rate, 
\item fuel price vectors $G\left(t_{i},T_{j}\right)=\left|\right|_{l\in L}G_{l}\left(t_{i},T_{j}\right)$,
$G\left(T_{j}\right)=\left|\right|_{i\in I_{j}}G\left(t_{i},T_{j}\right)$,
and\linebreak{}
$G=\left|\right|_{j\in J}e^{-\hat{r}T_{j}}G\left(T_{j}\right)$, 
\item emission price vector $G_{em}\left(T_{j}\right)=\left|\right|_{i\in I_{j}}G_{em}\left(t_{i},T_{j}\right)$,
and $G_{em}=\left|\right|_{j\in J}e^{-\hat{r}T_{j}}G_{em}\left(T_{j}\right)$, 
\item startup costs vector $\widehat{s}^{p,l,r}=\left[0,0,s^{p,l,r},0,0,0,0,0\right]^{\top}$,
$s^{p,l}=\left|\right|_{r\in R^{p,l}}\widehat{s}^{p,l,r}$, $s^{p}=\left|\right|_{l\in L}s^{p,l}$,
and $\widehat{s}^{p}=\left|\right|_{j\in J}e^{-\hat{r}T_{j}}s^{p}$,
where $s^{p,l,r}\geq0$ denotes the startup costs of power plant $r\in R^{p,l}$,
\end{itemize}
and finally 
\[
\pi_{p}=\left[\Pi^{\top},G^{\top},G_{em}^{\top},\left(\widehat{s}^{p}\right)^{\top}\right]^{\top}.
\]

Any producers' goal is to maximize their expected profit subject to
a risk budget. In this work we assume that the risk budget is expressed
in a mean-variance framework. The main argument that supports this
decision is that delta hedging, which is the most widely used hedging
strategy, can be captured in this framework. 

The profit $P_{p}\left(v_{p},\pi_{p}\right)$ of producer $p\in P$
can be calculated as
\begin{equation}
P_{p}\left(v_{p},\pi_{p}\right)=\sum_{j\in J}e^{-\hat{r}T_{j}}\left(\sum_{i\in I_{j}}P_{p}^{t_{i},T_{j}}\left(v_{p},\pi_{p}\right)-\sum_{l\in L}\sum_{r\in R^{p,l}}s^{p,l,r}W_{p,l,r}^{\left(3\right)}\left(T_{j}\right)\right),
\end{equation}
 where the profit $P_{p}^{t_{i},T_{j}}\left(v_{p},\pi_{p}\right)$
for each $i\in I_{j}$ and $j\in J$ can be calculated as

\[
P_{p}^{t_{i},T_{j}}\left(v_{p},\pi_{p}\right)=-\Pi\left(t_{i},T_{j}\right)V_{p}\left(t_{i},T_{j}\right)-O_{p}\left(t_{i},T_{j}\right)G_{em}\left(t_{i},T_{j}\right)-\sum_{l\in L}G_{l}\left(t_{i},T_{j}\right)F_{p,l}\left(t_{i},T_{j}\right).
\]
Under a mean-variance optimization framework, producers are interested
in the mean-variance utility

\[
\begin{array}{rcl}
\Psi_{p}\left(v_{p}\right) & = & \mathbb{E}^{\mathbb{P}}\left[P_{p}\left(v_{p},\pi_{p}\right)\right]-\frac{\lambda_{p}}{2}\text{Var}^{\mathbb{P}}\left[P_{p}\left(v_{p},\pi_{p}\right)\right]\\
\\
 & = & -\mathbb{E}^{\mathbb{P}}\left[\pi_{p}\right]^{\top}v_{p}-\frac{1}{2}\lambda_{p}v_{p}^{\top}Q_{p}v_{p},
\end{array}
\]
where $\lambda_{p}>0$ is their risk preference parameter and $Q_{p}:=\mathbb{E}^{\mathbb{P}}\left[\left(\pi_{p}-\mathbb{E}^{\mathbb{P}}\left[\pi_{p}\right]\right)\left(\pi_{p}-\mathbb{E}^{\mathbb{P}}\left[\pi_{p}\right]\right)^{\top}\right]$
an ``extended'' covariance matrix. Their objective is to solve the
following optimization problem
\begin{equation}
\begin{array}{rl}
\Phi_{p}=\underset{v_{p}}{\text{max }} & \Psi_{p}\left(v_{p}\right)\end{array}\tag{PR}\label{eq:opt_prod}
\end{equation}
subject to (\ref{eq:w1}), (\ref{eq:w2_1}), (\ref{eq:w2_2_b}), (\ref{eq:w3_1}),
(\ref{eq:w3_2}), (\ref{eq:w4_1}), (\ref{eq:w4_2_b}), (\ref{eq:w5_1}),
(\ref{eq:w5_2_b}), (\ref{eq:w6_1}), (\ref{eq:w6_2}), (\ref{eq:ramp}),
(\ref{eq:bounds_w7}), (\ref{eq:b_7}), (\ref{eq:ramp_up_eq1}), (\ref{eq:ramp_up_eq2}),
(\ref{eq:bounds_w8}), (\ref{eq:b_8}), (\ref{eq:ramp_down_eq1}),
(\ref{eq:ramp_down_eq2}), (\ref{eq:pb2}), (\ref{eq:pb2-1}), (\ref{eq:pb3-1}),
and (\ref{eq:pb4}). 

A standard approach to solving optimization problem with binary constraints
is to consider its continuous relaxation. We define a continuous relaxation
of Problem (\ref{eq:opt_prod}) as 
\begin{equation}
\begin{array}{rl}
\Phi_{p}=\underset{v_{p}}{\text{max }} & \Psi_{p}\left(v_{p}\right)\end{array}\tag{{\ensuremath{\widetilde{PR}}}}\label{eq:opt_prod_rel}
\end{equation}
subject to (\ref{eq:w1}), (\ref{eq:w2_1}), (\ref{eq:w3_1}), (\ref{eq:w3_2}),
(\ref{eq:w4_1}), (\ref{eq:w5_1}), (\ref{eq:w6_1}), (\ref{eq:w6_2}),
(\ref{eq:ramp}), (\ref{eq:bounds_w7}), (\ref{eq:ramp_up_eq1}),
(\ref{eq:ramp_up_eq2}), (\ref{eq:ramp_down_eq1}), (\ref{eq:ramp_down_eq2}),
(\ref{eq:bounds_w8}), (\ref{eq:pb2}), (\ref{eq:pb2-1}), (\ref{eq:pb3-1}),
and (\ref{eq:pb4}). Problem \ref{eq:opt_prod_rel} is the same as
problem Problem (\ref{eq:opt_prod}) except that it does not include
integrality constraints (\ref{eq:w2_2_b}), (\ref{eq:w4_2_b}), (\ref{eq:w5_2_b}),
(\ref{eq:b_7}) and (\ref{eq:b_8}).

\subsection{Consumers}

We make the assumption that demand is completely inelastic and that
each consumer $c\in C$ is responsible for satisfying a proportion
$p_{c}\in\left[0,1\right]$ of the total demand $D\left(T_{j}\right)$
at time $T_{j}$, $j\in J$. Since $p_{c}$ is a proportion, we clearly
have that $\sum_{c\in C}p_{c}=1.$ 

A number of electricity forward contracts consumer $c\in C$ buys
at trading time $t_{i}$, $i\in I_{j}$ for delivery at time $T_{j}$,
$j\in J$ is denoted by $V_{c}\left(t_{i},T_{j}\right)$.

\subsubsection{Inequality constraints}

We bound the the number of electricity contracts that each consumer
is allowed to trade as 
\begin{equation}
-V_{trade}\leq V_{p}\left(t_{i},T_{j}\right)\leq V_{trade}\label{eq:pb2-2}
\end{equation}
for some large $V_{trade}>0$. Trading of an infinite number of contracts
would clearly lead to a bankruptcy of one of the counterparties involved
and must thus be prevented. In \cite{troha2014theexistence} it was
shown, that if $V_{trade}$ is chosen large enough, then Constraint
(\ref{eq:pb2-2}) has no impact on the optimal solution and can be
eliminated from the problem.

\subsubsection{Equality constraints}

Consumers are responsible for satisfying the electricity demand of
end users. The electricity demand is expected to be satisfied for
each $T_{j}$, i.e.
\begin{equation}
\sum_{i\in I_{j}}V_{c}\left(t_{i},T_{j}\right)=p_{c}D\left(T_{j}\right).\label{eq:cb2}
\end{equation}
At the time of calculating the optimal decisions, consumers assume
that they know the future realization of demand $D\left(T_{j}\right)$
precisely. If the knowledge about the future realization of the demand
changes, then players can take recourse actions by recalculating their
optimal decisions with the updated demand forecast. Consumers may
assume that they will be able to execute the recourse actions, because
it is the job of the grid operator to ensure that a sufficient amount
of electricity is available on the market.

\subsubsection{Consumers' optimization problem}

Similarly as for producers, we can simplify the notation by introducing
electricity trading vectors $V_{c}\left(T_{j}\right)=\left|\right|_{i\in I_{j}}V_{c}\left(t_{i},T_{j}\right)$
and $V_{c}=\left|\right|_{j\in J}V_{c}\left(T_{j}\right)$.

Consumers would like to maximize their profit subject to a risk budget.
Similar to the model we introduced for producers, we assume that the
risk budget can be expressed in a mean-variance framework. The profit
of consumer $c\in C$ can be calculated as
\begin{equation}
P_{c}\left(V_{c},\Pi\right)=\sum_{j\in J}e^{-\hat{r}T_{j}}\left(\sum_{i\in I_{j}}-\Pi\left(t_{i},T_{j}\right)V_{c}\left(t_{i},T_{j}\right)+s_{c}p_{c}D\left(T_{j}\right)\right),\label{eq:23}
\end{equation}
where $\hat{r}\in\mathbb{R}$ denotes a constant interest rate and
$s_{c}\in\mathbb{R}$ denotes a contractually fixed price that consumer
$c\in C$ receives for selling the electricity further to end users
(e.g. households, businesses etc.). Note that the contractually fixed
price $s_{c}$ only affects the optimal objective value of consumer
$c\in C$, but not also her optimal solution. Since we are primarily
interested in optimal solutions, we simplify the notation and set
$s_{c}=0$. The correct optimal value can always be calculated via
post-processing when an optimal solution is already known. This may
be needed for risk management purposes. Note that in reality, end
users can change their electricity providers and consequently the
proportions $p_{c}$, $c\in C$. One could model the end user electricity
market with a similar equilibrium model as presented here, but this
is not the focus of this paper. Here we assume that proportions $p_{c}$
are constant for the period of our interest.

Under a mean-variance optimization framework consumers are interested
in the mean-variance utility

\[
\begin{array}{rcl}
\Psi_{c}\left(V_{c}\right) & = & \mathbb{E}^{\mathbb{P}}\left[P_{c}\left(V_{c},\Pi\right)\right]-\frac{\lambda_{c}}{2}\text{Var}^{\mathbb{P}}\left[P_{c}\left(V_{c},\Pi\right)\right]\\
\\
 & = & -\mathbb{E}^{\mathbb{P}}\left[\Pi\right]^{\top}V_{c}-\frac{\lambda_{c}}{2}V_{c}^{\top}Q_{c}V_{c},
\end{array}
\]
where $\lambda_{c}>0$ is their risk preference and $Q_{c}:=\mathbb{E}^{\mathbb{P}}\left[\left(\Pi-\mathbb{E}^{\mathbb{P}}\left[\Pi\right]\right)\left(\Pi-\mathbb{E}^{\mathbb{P}}\left[\Pi\right]\right)^{\top}\right]$
a covariance matrix. Their objective is to solve the following optimization
problem
\begin{equation}
\Phi_{c}=\underset{V_{c}}{\text{max }}\Psi_{c}\left(V_{c}\right)\tag{CO}\label{eq:opt_con}
\end{equation}
subject to (\ref{eq:pb2-2}) and (\ref{eq:cb2}).

\subsection{Matrix notation}

The analysis of the problem is greatly simplified if a more compact
notation is introduced.

Equality constraints of producer $p\in P$ can be expressed as
\[
A_{p}v_{p}=0
\]
and inequality constraints as 
\[
B_{p}v_{p}\leq b_{p}
\]
for some $A_{p}\in\mathbb{R}^{\left|J\right|\left(\left|L\right|+1\right)+1\times\dim v_{p}}$,
$B_{p}\in\mathbb{R}^{n_{p}\times\dim v_{p}}$ and $b_{p}\in\mathbb{R}^{n_{p}}$,
where $n_{p}$ denotes the number of the inequality constraints of
producer $p\in P$. Define feasible sets 
\[
\widetilde{S}_{p}:=\left\{ v_{p}:A_{p}v_{p}=a_{p}\text{ \text{and }}B_{p}v_{p}\leq b_{p}\right\} 
\]
and
\[
S_{p}:=\left\{ v_{p}:A_{p}v_{p}=a_{p}\text{ \text{and }}B_{p}v_{p}\leq b_{p}\text{ \text{and }}\left[v_{p}\right]_{i}\in\left\{ 0,1\right\} \:\forall i\in\mathcal{I}\right\} ,
\]
where $\mathcal{I}$ denotes a set of decisions variables with binarity
constraints (i.e. $W_{p,l,r}^{\left(k\right)}\left(T_{j}\right)$
for all $k\in\left\{ 2,4,5,7,8\right\} $, $r\in R^{p,l}$ and for
all $j\in J$).

It is useful to investigate the inner structure of the matrices. By
considering equality constraints (\ref{eq:pb2-1}), (\ref{eq:pb3-1}),
and (\ref{eq:pb4}) we can see that 
\begin{equation}
A_{p}=\left[\begin{array}{ccc}
\hat{A}_{1} & 0 & \hat{A}_{3,p}\\
0 & \hat{A}_{2} & \hat{A}_{4,p}
\end{array}\right]\label{eq:Ap}
\end{equation}
where $\hat{A}_{1}\in\mathbb{R}^{\left|J\right|\times N},\hat{A}_{2}\in\mathbb{R}^{\left(\left|J\right|\left|L\right|+1\right)\times N\left(\left|L\right|+1\right)},\hat{A}_{3,p}\in\mathbb{R}^{\left|J\right|\times\dim W_{p}},\hat{A}_{4,p}\in\mathbb{R}^{\left(\left|J\right|\left|L\right|+1\right)\times\dim W_{p}}$.
One can see that matrices $\hat{A}_{1}$ and $\hat{A}_{2}$ are independent
of producer $p\in P$ and matrices $\hat{A}_{3,p}$ and $\hat{A}_{4,p}$
depend on producer $p\in P$. One can further investigate the structure
of $\hat{A}_{1}$ and see 
\begin{equation}
\hat{A}_{1}=\left[\begin{array}{ccc}
1_{1} &  & 0\\
 & \ddots\\
0 &  & 1_{\left|J\right|}
\end{array}\right],\label{eq:A1}
\end{equation}
where $1_{j}$, $j\in J$ is a row vector of ones of length $\left|I_{j}\right|$.
Similarly,
\begin{equation}
\hat{A}_{2}=\left[\begin{array}{cccc}
\hat{A}_{1} & \cdots & 0 & 0\\
\vdots & \ddots & \vdots & \vdots\\
0 & \cdots & \hat{A}_{1} & 0\\
0 & \cdots & 0 & 1_{N}
\end{array}\right],\label{eq:A2}
\end{equation}
where the number of rows in the block notation above is $\left|L\right|+1$.
The first $\left|L\right|$ rows correspond to (\ref{eq:pb3-1}) and
the last row corresponds to (\ref{eq:pb4}). 

The profit of producer $p\in P$ can be written as
\[
P_{p}\left(v_{p},\pi_{p}\right)=-\pi_{p}^{\top}v_{p}.
\]
In a compact notation, the mean-variance utility of producer $p\in P$
can be calculated as
\[
\begin{array}{rcl}
\Psi_{p}\left(v_{p},\mathbb{E}^{\mathbb{P}}\left[\Pi\right]\right) & = & \mathbb{E}^{\mathbb{P}}\left[-\pi_{p}^{\top}v_{p}-\frac{1}{2}\lambda_{p}v_{p}^{\top}\left(\pi_{p}-\mathbb{E}^{\mathbb{P}}\left[\pi_{p}\right]\right)\left(\pi_{p}-\mathbb{E}^{\mathbb{P}}\left[\pi_{p}\right]\right)^{\top}v_{p}\right]\\
\\
 & = & -\mathbb{E}^{\mathbb{P}}\left[\pi_{p}\right]^{\top}v_{p}-\frac{1}{2}\lambda_{p}v_{p}^{\top}Q_{p}v_{p},
\end{array}
\]
where
\begin{equation}
Q_{p}:=\mathbb{E}^{\mathbb{P}}\left[\left(\pi_{p}-\mathbb{E}^{\mathbb{P}}\left[\pi_{p}\right]\right)\left(\pi_{p}-\mathbb{E}^{\mathbb{P}}\left[\pi_{p}\right]\right)^{\top}\right].
\end{equation}
The inner structure of matrix $Q_{p}$ is the following 
\begin{equation}
Q_{p}=\left[\begin{array}{ccc}
\hat{Q}_{1} & \hat{Q}_{2} & 0\\
\hat{Q}_{2}^{\top} & \hat{Q}_{3} & 0\\
0 & 0 & 0
\end{array}\right]
\end{equation}
where $\hat{Q}_{1}\in\mathbb{R}^{N\times N},\hat{Q}_{2}\in\mathbb{R}^{N\times\left(\dim B_{p}+\dim O_{p}\right)}=\mathbb{R}^{N\times N\left(\left|L\right|+1\right)},\hat{Q}_{3}\in\mathbb{R}^{N\left(\left|L\right|+1\right)\times N\left(\left|L\right|+1\right)}$.
One can see that $\hat{Q}_{1}$, $\hat{Q}_{2}$, and $\hat{Q}_{3}$
do not depend on producer $p\in P$. The size of the larger matrix
$Q_{p}$ depends on producer $p\in P$, because different producers
have different number of power plants. 

Producer $p\in P$ attempts to solve the following optimization problem
\[
\Phi_{p}\left(\mathbb{E}^{\mathbb{P}}\left[\Pi\right]\right)=\underset{v_{p}\in S_{p}}{\text{max }}-\mathbb{E}^{\mathbb{P}}\left[\pi_{p}\right]^{\top}v_{p}-\frac{1}{2}\lambda_{p}v_{p}^{\top}Q_{p}v_{p},
\]
with the following continuous relaxation
\[
\Phi_{p}\left(\mathbb{E}^{\mathbb{P}}\left[\Pi\right]\right)=\underset{v_{p}\in\widetilde{S}_{p}}{\text{max }}-\mathbb{E}^{\mathbb{P}}\left[\pi_{p}\right]^{\top}v_{p}-\frac{1}{2}\lambda_{p}v_{p}^{\top}Q_{p}v_{p}.
\]

The equality constraints of consumer $c\in C$ can be expressed as
\[
A_{c}V_{c}=a_{c}
\]
and the inequality constraints as 
\[
B_{c}V_{c}\leq b_{c}
\]
where $A_{c}=\hat{A}_{1}$, $B_{c}\in\mathbb{R}^{2N\times N}$, $a_{c}\in\mathbb{R}^{\left|J\right|}$
and $b_{c}\in\mathbb{R}^{2N}$. Define a feasible set
\[
S_{c}:=\left\{ V_{c}\in\mathbb{R}^{N}:A_{c}V_{c}=a_{c}\text{ \text{and }}B_{c}V_{c}\leq b_{c}\right\} .
\]

The profit of consumer $c\in C$ can be written as
\[
P_{c}\left(V_{c},\Pi\right)=-\Pi^{\top}V_{c}.
\]
In a compact notation, the mean-variance utility of a consumer $c\in C$
can be calculated as
\[
\begin{array}{rcl}
\Psi_{c}\left(V_{c},\mathbb{E}^{\mathbb{P}}\left[\Pi\right]\right) & = & \mathbb{E}^{\mathbb{P}}\left[-\Pi^{\top}V_{c}-\frac{1}{2}\lambda_{c}V_{c}^{\top}\left(\Pi-\mathbb{E}^{\mathbb{P}}\left[\Pi\right]\right)\left(\Pi-\mathbb{E}^{\mathbb{P}}\left[\Pi\right]\right)^{\top}V_{c}\right]\\
\\
 & = & -\mathbb{E}^{\mathbb{P}}\left[\Pi\right]^{\top}V_{c}-\frac{\lambda_{c}}{2}V_{c}^{\top}Q_{c}V_{c},
\end{array}
\]
where
\begin{equation}
Q_{c}:=\mathbb{E}^{\mathbb{P}}\left[\left(\Pi-\mathbb{E}^{\mathbb{P}}\left[\Pi\right]\right)\left(\Pi-\mathbb{E}^{\mathbb{P}}\left[\Pi\right]\right)^{\top}\right].
\end{equation}
 Moreover, note that $Q_{c}=\hat{Q}_{1}$ for all $c\in C$. We set
$s_{c}=0$, w.l.o.g. Consumer $c\in C$ attempts to solve the following
optimization problem
\[
\Phi_{c}\left(\mathbb{E}^{\mathbb{P}}\left[\Pi\right]\right)=\underset{V_{c}\in S_{c}}{\text{max }}-\mathbb{E}^{\mathbb{P}}\left[\Pi\right]^{\top}V_{c}-\frac{\lambda_{c}}{2}V_{c}^{\top}Q_{c}V_{c}.
\]

\subsection{The hypothetical market agent\label{sub:The-hypothetical-market}}

Given the price vectors of electricity $\Pi$, fuel $G$, and emissions
$G_{em}$, each producer $p\in P$ and each consumer $c\in C$ can
calculate their optimal electricity trading vectors $V_{p}$ and $V_{c}$
by solving (\ref{eq:opt_prod}) and (\ref{eq:opt_con}), respectively.
However, the players are not necessary able to execute their calculated
optimal trading strategies because they may not find the counterparty
to trade with. In reality each contract consists of a buyer and a
seller, which imposes an additional constraint (also called the market
clearing constraint) that matches the number of short and long electricity
contracts for each $i\in I_{j}$ and $j\in J$ as follows,
\begin{equation}
\sum_{c\in C}V_{c}\left(t_{i},T_{j}\right)+\sum_{p\in P}V_{p}\left(t_{i},T_{j}\right)=0.\label{eq:mk}
\end{equation}
The electricity market is responsible for satisfying this constraint
by matching buyers with sellers. The matching is done through sharing
of the price and order book information among all market participants.
If at the current price there are more long contract than short contracts,
it means that the current price is too low and asks will start to
be submitted at higher prices. The converse occurs, if there are more
short contracts than long contracts. Eventually, the electricity price
at which the number of long and short contracts matches is found.
At such a price the constraint (\ref{eq:mk}) is satisfied ``naturally''
without explicitly requiring the players to satisfy it. They do so
because it is in their best interest, i.e. it maximizes their mean-variance
objective functions.

The question is how to formulate such an equilibrium constraint in
an optimization framework. A naive approach of writing the market
clearing constraint as an ordinary constraint forces the players to
satisfy it regardless of the price. We need a mechanism that models
the matching of buyers and sellers as it is performed by the electricity
market. For this purpose, we introduce a hypothetical market agent
who is allowed to slowly change electricity prices to ensure that
(\ref{eq:mk}) is satisfied.

Let the hypothetical market agent have the following profit function
\begin{equation}
\begin{array}{rcl}
P_{M}\left(\Pi,V\right) & = & \sum_{j\in J}e^{-\hat{r}T_{j}}\left[\sum_{i\in I_{j}}\Pi\left(t_{i},T_{j}\right)\left(\sum_{c\in C}V_{c}\left(t_{i},T_{j}\right)+\sum_{p\in P}V_{p}\left(t_{i},T_{j}\right)\right)\right]\\
\\
 & = & \mathbb{E}^{\mathbb{P}}\left[\Pi\right]^{\top}\left(\sum_{c\in C}V_{c}+\sum_{p\in P}V_{p}\right)
\end{array}\label{eq:market-1}
\end{equation}
and the expected profit
\begin{equation}
\Psi_{M}\left(\mathbb{E}^{\mathbb{P}}\left[\Pi\right],V\right)=\mathbb{E}^{\mathbb{P}}\left[P_{M}\left(V,\Pi\right)\right],\label{eq:utility_market}
\end{equation}
where $V=\left[V_{P}^{\top},V_{C}^{\top}\right]^{\top}$, $V_{P}=\left|\right|_{p\in P}V_{p}$,
and $V_{C}=\left|\right|_{c\in C}V_{c}$ and let the hypothetical
market agent attempts to solve
\begin{equation}
\Phi_{M}\left(V\right)=\underset{\mathbb{E}^{\mathbb{P}}\left[\Pi\right]}{\text{max }}\Psi_{M}\left(\mathbb{E}^{\mathbb{P}}\left[\Pi\right],V\right).\label{eq:SO}
\end{equation}
The KKT conditions for (\ref{eq:SO}) in the matrix notation read
\begin{equation}
\sum_{c\in C}V_{c}+\sum_{p\in P}V_{p}=0,\label{eq:KKT_hyp_mark}
\end{equation}
which is exactly the same as (\ref{eq:mk}). Note, that the equivalence
of (\ref{eq:mk}) and (\ref{eq:SO}) is a theoretical result that
has to be applied with caution in an algorithmic framework. Formulation
(\ref{eq:SO}) is clearly unstable since only a small mismatch in
the market clearing constraint sends the prices to $\pm\infty$. Thus,
a stable formulation of the hypothetical market agent must be found.
Let us now analyze the hypothetical market agent with the following,
slightly altered, optimization problem
\begin{equation}
\begin{array}{rl}
\underset{\mathbb{E}^{\mathbb{P}}\left[\Pi\right]}{\text{max }} & \Psi_{M}\left(\mathbb{E}^{\mathbb{P}}\left[\Pi\right],V\right)\\
\\
\text{s.t.} & \sum_{c\in C}V_{c}+\sum_{p\in P}V_{p}=0\\
\\
 & \mu_{M}=0,
\end{array}\tag{HMA}\label{eq:market_new}
\end{equation}
where $\mu_{M}$ denotes the dual variables of the equality constraint
in (\ref{eq:market_new}). It is trivial to check that the optimality
conditions for (\ref{eq:market_new}) correspond to (\ref{eq:mk}).
Formulation (\ref{eq:market_new}) is clearly stable, because the
market clearing constraint is satisfied precisely. The equality constraint
on the dual variables makes sure that the optimal solution remains
the same if the market clearing constraint is removed after the calculation
of the optimal solution. Formulation (\ref{eq:market_new}) is used
as a definition of the hypothetical market agent in the rest of this
work.

We can see that, by affecting the expected electricity price, the
hypothetical agent changes the electricity price process. It is not
immediately clear how to construct such a stochastic process or that
such a stochastic process exists at all. We refer the reader to \cite{troha2014theexistence},
where a constructive proof of the existence is given. The proof is
based on the Doob decomposition theorem, where we allow the hypothetical
market agent to control an integrable predictable term of the process,
while keeping the martingale term of the process intact.

For the further argumentation we define $v_{P}=\left|\right|_{p\in P}v_{p}$
and $v=\left[v_{P}^{\top},V_{C}^{\top}\right]^{\top}$.

\subsection{Nash equilibrium\label{sec:Analysis}}

Binarity constraints (\ref{eq:w2_2_b}), (\ref{eq:w4_2_b}), (\ref{eq:w5_2_b}),
(\ref{eq:b_7}) and (\ref{eq:b_8}) of each producer significantly
complicate the analysis of Problem (\ref{eq:opt_prod}) and thus,
we focus on the continuous relaxation (\ref{eq:opt_prod_rel}) instead.
We then show through various numerical results in Section \ref{sec:N1}
and Section \ref{sub:N2}, that binarity constraints (\ref{eq:w2_2_b}),
(\ref{eq:w4_2_b}), (\ref{eq:w5_2_b}), (\ref{eq:b_7}) and (\ref{eq:b_8})
do not have a significant impact on the equilibrium electricity price. 

Using the continuous relaxation (\ref{eq:opt_prod_rel}), we are interested
in finding a Nash equilibrium defined as

\begin{definition}\label{Nash-Equilibrium-(NE)}Nash Equilibrium
(NE)

Decisions $v^{*}$ and $\mathbb{E}^{\mathbb{P}}\left[\Pi\right]^{*}$
constitute a Nash equilibrium if
\begin{enumerate}
\item For every producer $p\in P$, $v_{p}^{*}$ is a strategy such that
\begin{equation}
\Psi_{p}\left(v_{p},\mathbb{E}^{\mathbb{P}}\left[\Pi\right]^{*}\right)\leq\Psi_{p}\left(v_{p}^{*},\mathbb{E}^{\mathbb{P}}\left[\Pi\right]^{*}\right)
\end{equation}
for all $v_{p}\in\widetilde{S}_{p}$;
\item For every consumer $c\in C$, $V_{c}^{*}$ is a strategy such that
\begin{equation}
\Psi_{c}\left(V_{c},\mathbb{E}^{\mathbb{P}}\left[\Pi\right]^{*}\right)\leq\Psi_{c}\left(V_{c}^{*},\mathbb{E}^{\mathbb{P}}\left[\Pi\right]^{*}\right)
\end{equation}
for all $V_{c}\in S_{c}$;
\item Price vector $\mathbb{E}^{\mathbb{P}}\left[\Pi\right]^{*}$ maximizes
the objective function of the hypothetical market agent, i.e. 
\begin{equation}
\Psi_{M}\left(\mathbb{E}^{\mathbb{P}}\left[\Pi\right],v^{*}\right)\leq\Psi_{M}\left(\mathbb{E}^{\mathbb{P}}\left[\Pi\right]^{*},v^{*}\right)\label{eq:market}
\end{equation}
for all $\mathbb{E}^{\mathbb{P}}\left[\Pi\right]\in S_{M}$.
\end{enumerate}
\end{definition}

From Definition (\ref{Nash-Equilibrium-(NE)}), it is not clear whether
a NE for our problem exists and whether it is unique. This problem
was thoroughly investigated in \cite{troha2014theexistence}. Roughly
speaking, it was shown that if the demand of the end users can be
covered by the available system of power plants, then a NE exists.
Moreover, if the power plants are similar enough (if there are no
big gaps in the efficiency of the power plants), then one can show
that the NE is also unique. On the other hand, if power plants are
similar enough, then the expected equilibrium price of each electricity
contract might be an interval instead of a single point. 

In this paper we focus on the numerical calculation of the NE under
the assumption of the existence of solution. For this paper, we assume
the following, a slightly stricter, condition.

\begin{assumption}\label{ass: as1} For all $p\in P$, the exists
vector $v_{p}$ such that $A_{p}v_{p}=a_{p}$ a.s. and $B_{p}v_{p}<b_{p}$
a.s., for all $c\in C$, there exists vector $V_{c}$ such that $A_{c}V_{c}=a_{c}$
a.s. and $B_{c}V_{c}<b_{c}$ a.s., and the vectors $V_{p}$ and $V_{c}$
can be chosen so that (\ref{eq:KKT_hyp_mark}) is satisfied.\end{assumption}

\subsection{Quadratic programming formulation\label{sec:Algorithm}}

The traditional approach to solving equilibrium optimization problems
is through shadow prices (see \cite{demaeredaertrycke2012liquidity}
for example). However, this approach is only valid when no inequality
constraints are present. Shadow prices depend on the set of active
constraints and thus one can only use this approach when the active
set is known. In inequality constrainted optimization, the active
set is usually not know in advance and thus a different approach is
needed. The proposed formulation below can be seen as an extension
of the shadow price concept to inequality constrained optimization
problems. 

A naive approach for solving inequality constrained equilibrium optimization
problem would be to choose an expected price vector $\mathbb{E}^{\mathbb{P}}\left[\Pi\right]$
and then calculate optimal solutions for each producer $p\in P$ and
each consumer $c\in C$ by solving (\ref{eq:opt_prod_rel}) and (\ref{eq:opt_con}),
respectively. If at such price $\left\Vert \sum_{c\in C}V_{c}+\sum_{p\in P}V_{p}\right\Vert $
is close to zero, then the solution is found and $\mathbb{E}^{\mathbb{P}}\left[\Pi\right]$
is an equilibrium expected price vector. Otherwise, we have to adjust
the expected price vector and repeat the procedure. We can see that
such an algorithm is costly, because it requires to solve a large
optimization problem (i.e. to calculate the optimal solutions of each
producer and each consumer) multiple times. In the section below,
we show that we can do much better than the naive approach. Using
the reformulation we propose, the large optimization problem must
be solved only once. 

Necessary and sufficient conditions for all $v_{k}$, $k\in P\cup C$
and $\mathbb{E}^{\mathbb{P}}\left[\Pi\right]$ to constitute a NE
are the following, due to the fact that Assumption \ref{ass: as1}
implies the Slater condition, 
\begin{equation}
\begin{array}{rcl}
-\mathbb{E}^{\mathbb{P}}\left[\pi_{k}\right]^{\top}-\lambda_{k}Q_{k}v_{k}-B_{k}^{\top}\eta_{k}-A_{k}^{\top}\mu_{k} & = & 0\\
\\
\eta_{k}^{\top}\left(B_{k}v_{k}-b_{k}\right) & = & 0\\
\\
B_{k}v_{k}-b_{k} & \leq & 0\\
\\
A_{k}v_{k}-a_{k} & = & 0\\
\\
\eta_{k} & \geq & 0\\
\\
\sum_{c\in C}V_{c}+\sum_{p\in P}V_{p} & = & 0.
\end{array}\label{eq:KKT_all-2}
\end{equation}
The last equation corresponds to the KKT conditions of the hypothetical
market agent. 

We can now interpret (\ref{eq:KKT_all-2}) as the KKT conditions of
one large optimization problem that includes the new definition (\ref{eq:market_new})
of the hypothetical market agent. To see this, we join all decision
variables into one vector $x:=\left[v^{\top},\mathbb{E}^{\mathbb{P}}\left[\Pi\right]^{\top}\right]^{\top}$
and rewrite
\begin{itemize}
\item the equality constraints as $Ax=a$ with $a:=\left[a_{p_{1}}^{\top},...,a_{p_{P}}^{\top},a_{c_{1}}^{\top},...,a_{c_{C}}^{\top},\underbrace{0,...,0}_{N}\right]^{\top}$where
the number of ending zeros is equal to $N$, and 
\[
A:=\left[\begin{array}{ccccccc}
A_{p_{1}} & 0 &  &  &  &  & 0\\
0 & \ddots & 0 &  &  &  & \vdots\\
 & 0 & A_{p_{P}} & 0 &  &  & 0\\
 &  & 0 & A_{c_{1}} & 0 &  & 0\\
 &  &  & 0 & \ddots & 0 & \vdots\\
 &  &  &  & 0 & A_{c_{C}} & 0\\
M_{p_{1}} & \cdots & M_{p_{P}} & I & \cdots & I & 0
\end{array}\right],
\]
where $M_{p}\in\mathbb{R}^{N}\times\mathbb{R}^{\dim v_{p}}$ is a
matrix defined as 
\[
M_{p}=\left[\text{diag}\left(\underset{N}{\underbrace{1,...,1}}\right)\left|\begin{array}{ccc}
0 & \cdots & 0\\
\vdots & \ddots & \vdots\\
0 & \cdots & 0
\end{array}\right.\right],
\]

\item the inequality constraints as $Bx\leq b$ with $b:=\left[b_{p_{1}}^{\top},...,b_{p_{P}}^{\top},b_{c_{1}}^{\top},...,b_{c_{C}}^{\top}\right]^{\top}$,
and
\[
B:=\left[\begin{array}{ccccccc}
B_{p_{1}} & 0 &  &  &  &  & 0\\
0 & \ddots & 0 &  &  &  & \vdots\\
 & 0 & B_{p_{P}} & 0 &  &  & 0\\
 &  & 0 & B_{c_{1}} & 0 &  & 0\\
 &  &  & 0 & \ddots & 0 & \vdots\\
 &  &  &  & 0 & B_{c_{C}} & 0
\end{array}\right],
\]

\item the objective function as $-\pi^{\top}x-\frac{1}{2}x^{\top}Qx$ with
$\pi:=\left[\mathbb{E}^{\mathbb{P}}\left[\pi_{0,p_{1}}\right]^{\top},...,\mathbb{E}^{\mathbb{P}}\left[\pi_{0,p_{P}}\right]^{\top},\underset{\left(\left|C\right|+1\right)N}{\underbrace{0,...,0}}\right]^{\top}$
where $\pi_{0,p}$ is $\pi_{p}$ with elements of $\Pi$ set to zero,
and
\begin{equation}
Q:=\left[\begin{array}{ccccccc}
\lambda_{p_{1}}Q_{p_{1}} & 0 &  &  &  &  & M_{p_{1}}^{\top}\\
0 & \ddots & 0 &  &  &  & \vdots\\
 & 0 & \lambda_{p_{P}}Q_{p_{P}} & 0 &  &  & M_{p_{P}}^{\top}\\
 &  & 0 & \lambda_{c_{1}}Q_{c_{1}} & 0 &  & I\\
 &  &  & 0 & \ddots & 0 & \vdots\\
 &  &  &  & 0 & \lambda_{c_{C}}Q_{c_{C}} & I\\
M_{p_{1}} & \cdots & M_{p_{P}} & I & \cdots & I & 0
\end{array}\right],\label{eq:Q-1}
\end{equation}

\item the dual variables as $\eta:=\left[\eta_{p_{1}}^{\top},...,\eta_{p_{P}}^{\top},\eta_{c_{1}}^{\top},...,\eta_{c_{C}}^{\top}\right]$
and $\mu:=\left[\mu_{p_{1}}^{\top},...,\mu_{p_{P}}^{\top},\mu_{c_{1}}^{\top},...,\mu_{c_{C}}^{\top},\mu_{M}^{\top}\right]$. 
\end{itemize}
In this setting we can reformulate the KKT conditions (\ref{eq:KKT_all-2})
as follows,
\begin{equation}
\begin{array}{rcl}
-\pi-Qx-B^{\top}\eta-A^{\top}\mu & = & 0\\
\\
\eta^{\top}\left(Bx-b\right) & = & 0\\
\\
Bx-b & \leq & 0\\
\\
Ax-a & = & 0\\
\\
\eta & \geq & 0\\
\\
\mu_{M} & = & 0.
\end{array}\label{eq:KKT_all-3}
\end{equation}

Since the additional constraints $\mu_{M}=0$ on the dual variables
of Problem \ref{eq:KKT_all-3} cannot be handled by most of the available
quadratic programming solvers, we have to reformulate the problem
in a dual form. We start by formulating the optimization problem out
of the KKT conditions (\ref{eq:KKT_all-3}) as 
\begin{equation}
\begin{array}{cl}
\underset{x}{\text{max}} & -\pi^{\top}x-\frac{1}{2}x^{\top}Qx\\
\\
\text{s.t.} & Ax=a\\
\\
 & Bx\leq b\\
\\
 & \mu_{M}=0
\end{array}\label{eq:op_all-2}
\end{equation}
and by defining the Lagrangian as

\[
\mathcal{L}\left(x,\mu,\eta\right)=\left\{ \begin{array}{rl}
-\frac{1}{2}x^{\top}Qx-\pi^{\top}x-\left(Ax-a\right)^{\top}\mu-\left(Bx-b\right)^{\top}\eta; & \text{if }\eta\geq0\\
\\
-\infty; & \text{otherwise}.
\end{array}\right.
\]
One can show that, $Q\succeq0$ for all vectors that satisfy the market
clearing constraint (\ref{eq:mk}) (for the proof see \cite{troha2014calculation}).
$\mathcal{L}\left(x,\mu,\eta\right)$ is therefore a smooth and convex
function. The unconstrained minimizer can be determined by solving
$\mathcal{D}_{x}\mathcal{L}\left(x,\mu,\eta\right)=0$. Calculating
\[
\mathcal{D}_{x}\mathcal{L}\left(x,\mu,\eta\right)=-Qx-\pi-A^{\top}\mu-B^{\top}\eta
\]
and inserting $\pi$ back to the Lagrangian, an equivalent formulation
is obtained as follows,
\[
\mathcal{L}\left(x,\mu,\eta\right)=\left\{ \begin{array}{rl}
\frac{1}{2}x^{\top}Qx+a^{\top}\mu+b^{\top}\eta & \text{if }\eta\geq0\text{ \text{and }}-Qx-\pi-A^{\top}\mu-B^{\top}\eta=0,\\
\\
-\infty & \text{otherwise}
\end{array}\right.
\]
Relating the latter to a maximization optimization problem, the following
formulation is obtained 
\begin{equation}
\begin{array}{cl}
\underset{x,\mu,\eta}{\text{max}} & -\frac{1}{2}x^{\top}Qx-\mu^{\top}a-\eta^{\top}b\\
\\
\text{s.t.} & Qx+A^{\top}\mu+B^{\top}\lambda+\pi=0\\
\\
 & \eta\geq0\\
\\
 & \mu_{M}=0.
\end{array}\label{eq:op_all-1-1}
\end{equation}
Problem (\ref{eq:op_all-1-1}) is equivalent to Problem (\ref{eq:op_all-2}),
but it can be solved using any quadratic programming algorithm. 

Based on our discussion in Section \ref{sec:Analysis}, we can see
that (\ref{eq:op_all-1-1}) was obtained by considering Problem (\ref{eq:opt_prod_rel}),
which is a continuous relaxation of Problem (\ref{eq:opt_prod}).
To estimate the error caused by the continuous relaxation, we use
the following procedure:
\begin{enumerate}
\item We calculate the equilibrium electricity price $\mathbb{E}^{\mathbb{P}}\left[\Pi\right]^{*}$
by solving problem (\ref{eq:op_all-1-1}).
\item Using the equilibrium electricity price $\mathbb{E}^{\mathbb{P}}\left[\Pi\right]^{*}$
from the previous step, we calculate optimal trading vectors $V_{p}^{*}$,
$p\in P$ for all producers and optimal trading vectors $V_{c}^{*}$,
$c\in C$ for all consumers by solving (\ref{eq:opt_prod}) and (\ref{eq:opt_con}),
respectively.
\item We calculate the error as
\begin{equation}
\text{MIQP}:=\sum_{c\in C}V_{c}^{*}+\sum_{p\in P}V_{p}^{*}.\label{eq:MIQP}
\end{equation}
 
\end{enumerate}
In order to verify the procedure above, we apply the following very
similar procedure:
\begin{enumerate}
\item We calculate the equilibrium electricity price $\mathbb{E}^{\mathbb{P}}\left[\Pi\right]^{*}$
by solving problem (\ref{eq:op_all-1-1}).
\item Using the equilibrium electricity price $\mathbb{E}^{\mathbb{P}}\left[\Pi\right]^{*}$
from the previous step, we calculate optimal trading vectors $V_{p}^{*}$,
$p\in P$ for all producers and optimal trading vectors $V_{c}^{*}$,
$c\in C$ for all consumers by solving (\ref{eq:opt_prod_rel}) and
(\ref{eq:opt_con}), respectively.
\item We calculate the error as
\begin{equation}
\text{QP}:=\sum_{c\in C}V_{c}^{*}+\sum_{p\in P}V_{p}^{*}.\label{eq:QP}
\end{equation}

\end{enumerate}
In Section \ref{sec:N1} and Section \ref{sub:N2}, we present the
MIQP and QP when modeling the entire UK power grid.

\section{Numerical results\label{sec:N1}}

In this section we discuss the numerical results and apply our model
from Section \ref{sec:Problem-description} to model the realistic
UK power grid.

\subsection{Estimation of parameters\label{sub:Estimation-of-parameters}}

In this section we investigate how to estimate various parameters
of power plants that enter our model described in Section \ref{sec:Problem-description}. 

In the UK all power plants are required to submit their available
capacity as well as ramp-up and ramp-down constraints to the grid
operator on a half hourly basis. This data is publicly available at
the Elexon website%
\footnote{http://www.bmreports.com/%
}. A more challenging problem is to estimate of the efficiency $c^{p,l,r}$,
startup costs $s^{p,l,r}$ and the carbon emission intensity factor
$g^{p,l,r}$ for each power plant $r\in R^{p,l}$. For the purpose
of the calibration, we assume that all producers are risk neutral
and set $\lambda_{p}=0$ for all $p\in P$. Furthermore, we neglect
the ramp-up and ramp-down constraints (\ref{eq:ramp}), (\ref{eq:ramp_up_eq1}),
(\ref{eq:ramp_up_eq2}), (\ref{eq:ramp_down_eq1}), and (\ref{eq:ramp_down_eq2}).
Since each power plant is treated separately, we avoid writing subscripts/superscripts
$p,l,r$.

Before we explore the details of the calibration process, let us establish
a few relationships that will prove useful later in this section.
We can see that a power plant will produce at time $T_{j}$ if the
income from selling electricity at the spot price is greater than
the costs of purchasing the required fuel and emission certificates
at the current spot price (remember that a power plant has to cover
the startup costs too). Thus, for a power plant that runs on fuel
$l\in L$ and produces electricity at time $T_{j}$,
\begin{equation}
\Pi\left(T_{j},T_{j}\right)-cG_{l}\left(T_{j},T_{j}\right)-gG_{em}\left(T_{j},T_{j}\right)>0\label{eq:e2}
\end{equation}
must hold for production to take place.

It is immediately clear why (\ref{eq:e2}) must hold when only spot
contracts are available. Let us investigate why (\ref{eq:e2}) holds
also if forward and future electricity contracts are available on
the market. At any trading time $t_{i}$, $i\in I_{j}$, a rational
producer could enter into a short electricity forward contract and
simultaneously into a long fuel and emission forward contract if
\begin{equation}
\Pi\left(t_{i},T_{j}\right)-cG_{l}\left(t_{i},T_{j}\right)-gG_{em}\left(t_{i},T_{j}\right)>0.\label{eq:prodIneq-1}
\end{equation}
At delivery time $T_{j}$, this producer has two options:
\begin{itemize}
\item To acquire the delivery of the fuel and emission certificates bought
at trading time $t_{i}$ and produce electricity. In this case, she
observes the following profit 
\begin{equation}
\widehat{P_{1}}\left(T_{j}\right)=\Pi\left(t_{i},T_{j}\right)-cG_{l}\left(t_{i},T_{j}\right)-gG_{em}\left(t_{i},T_{j}\right).
\end{equation}

\item To produce no electricity and instead close the forward electricity,
fuel, and emission contracts. In this case, she observes the following
profit 
\begin{equation}
\begin{array}{rcl}
\widehat{P_{2}}\left(T_{j}\right) & = & \left[\Pi\left(t_{i},T_{j}\right)-\Pi\left(T_{j},T_{j}\right)\right]-c\left[G_{l}\left(t_{i},T_{j}\right)-G_{l}\left(T_{j},T_{j}\right)\right]\\
\\
 &  & -g\left[G_{em}\left(t_{i},T_{j}\right)-G_{em}\left(T_{j},T_{j}\right)\right].
\end{array}
\end{equation}

\end{itemize}
Power plant $r\in R^{p,l}$ will run at $T_{j}$ if and only if
\begin{equation}
\widehat{P_{1}}\left(T_{j}\right)>\widehat{P_{2}}\left(T_{j}\right).\label{eq:e1}
\end{equation}
With some reordering of the terms, it is easy to see that inequality
(\ref{eq:e1}) is equivalent to inequality (\ref{eq:e2}).

Using the reasoning above, we can conclude that, for the purpose of
determining the stack, it is enough to focus only on spot electricity,
fuel and emission contracts. By taking into account startup costs
and equations described in Section \ref{sub:Producer}, the profit
maximization problem of each power plant can be written as
\begin{equation}
\underset{W^{\left(2\right)},W^{\left(4\right)},W^{\left(6\right)}}{\max}\sum_{j\in J}\widehat{W}\left(T_{j}\right)\overline{P}\left(T_{j}\right)-W^{\left(4\right)}\left(T_{j}\right)s\label{eq:mg}
\end{equation}
subject to
\begin{align}
W^{\left(4\right)}\left(T_{j}\right)\geq W^{\left(2\right)}\left(T_{j}\right)-W^{\left(2\right)}\left(T_{j-1}\right), & \forall j\in J\backslash\left\{ 1\right\} \label{eq:C1}\\
W^{\left(6\right)}\left(T_{j}\right)\leq W^{\left(2\right)}\left(T_{j}\right), & \forall j\in J\label{eq:C2}\\
W^{\left(k\right)}\left(T_{j}\right)\in\left[0,1\right], & \forall j\in J,\; k\in\left\{ 2,4,6\right\} \label{eq:C3}\\
W^{\left(2\right)}\left(T_{j}\right)\in\mathbb{Z}, & \forall j\in J,\label{eq:C4}
\end{align}
where 
\begin{equation}
\widehat{W}\left(T_{j}\right)=W^{\left(2\right)}\left(T_{j}\right)\overline{W}_{min}\left(T_{j}\right)+W^{\left(6\right)}\left(T_{j}\right)\left(\overline{W}_{max}\left(T_{j}\right)-\overline{W}_{min}\left(T_{j}\right)\right),\;\forall j\in J\label{eq:allW}
\end{equation}
and 
\begin{equation}
\bar{P}\left(T_{j}\right)=\Pi\left(T_{j},T_{j}\right)-cG\left(T_{j},T_{j}\right)-gG_{em}\left(T_{j},T_{j}\right),\;\forall j\in J.\label{eq:costs}
\end{equation}
Note that we do not have to impose the integrality constraints for
variable $W^{\left(4\right)}\left(T_{j}\right)$ $j\in J$, because
they are implied by (\ref{eq:C4}) and (\ref{eq:C1}).
To account for the neglected risk premium, trading costs, maintenance
costs etc. we introduce an additional constant $m>0$ and include
it in (\ref{eq:costs}) as
\begin{equation}
\bar{P}\left(T_{j}\right)=\Pi\left(T_{j},T_{j}\right)-cG\left(T_{j},T_{j}\right)-gG_{em}\left(T_{j},T_{j}\right)-m,\;\forall j\in J.\label{eq:costs-1}
\end{equation}

We are interested to know how the optimal solution of Problem (\ref{eq:mg})
depends on parameters $c$, $g$, $m$, and $s$. Let $\widehat{W}^{*}\left(T_{j};c,g,m,s\right)$
denote the optimal production of Problem (\ref{eq:mg}). Our task
is to find $c$, $g$, $m$, and $s$ that satisfy
\begin{equation}
\underset{c,g,m,s}{\min}\sum_{j\in J}\left(\widehat{W}^{*}\left(T_{j};c,g,m,s\right)-\tilde{W}\left(T_{j}\right)\right)^{2}\label{eq:outer}
\end{equation}
where $\tilde{W}\left(T_{j}\right)$ denotes observed historical production
of a power plant. The optimization problem is a bi-level optimization
problem where (\ref{eq:outer}) corresponds to the outer optimization
problem and (\ref{eq:mg}) corresponds to the inner optimization problem.
Traditionally, such problems have been very difficult to solve, because
they are highly non-convex and the process of finding the optimal
solution of the outer optimization problem requires many expensive
evaluations of the inner integer programming optimization problem.
However, we can show that in our case a difficult integer programming
problem can be replaced by a tractable linear programming problem
without affecting the optimal solution.

We can use the following proposition to see that optimal solution
of Problem (\ref{eq:mg}) can be calculated by a linear programming
relaxation.

\begin{proposition}\label{prop:TU}The matrix of constraints (\ref{eq:C1}),
(\ref{eq:C2}), and (\ref{eq:C3}) for Problem (\ref{eq:mg}) is totally
unimodular.\end{proposition}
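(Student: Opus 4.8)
The plan is to write out the constraint matrix explicitly and then verify total unimodularity by exhibiting the required combinatorial structure. First I would fix an ordering of the decision variables, grouping them as $W^{(2)}(T_1),\dots,W^{(2)}(T_{|J|})$, then $W^{(4)}(T_2),\dots,W^{(4)}(T_{|J|})$, then $W^{(6)}(T_1),\dots,W^{(6)}(T_{|J|})$, and list the rows of the constraint system (\ref{eq:C1}), (\ref{eq:C2}), (\ref{eq:C3}) in the same blockwise fashion. The bound constraints (\ref{eq:C3}) contribute only rows that are $\pm$ a single unit vector (together with the trivial sign rows), and such rows never obstruct total unimodularity, so the real content is the submatrix formed by (\ref{eq:C1}) and (\ref{eq:C2}).

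Next I would observe the sparsity pattern of that submatrix: each row of (\ref{eq:C2}) is $W^{(6)}(T_j)-W^{(2)}(T_j)\le 0$, i.e. it has exactly one $+1$ (in a $W^{(6)}$ column) and one $-1$ (in a $W^{(2)}$ column); each row of (\ref{eq:C1}) is $W^{(2)}(T_{j-1})-W^{(2)}(T_j)+W^{(4)}(T_j)\le 0$, which has a $+1$ in a $W^{(2)}$ column, a $-1$ in the adjacent $W^{(2)}$ column, and a $+1$ in a $W^{(4)}$ column. Every entry is thus in $\{-1,0,+1\}$, and every row has at most three nonzeros. My intended route is the classical sufficient condition for total unimodularity (the Ghouila-Houri / interval-matrix type criterion, or equivalently a direct network-matrix argument): I would show the columns can be partitioned into two sets $Q_1,Q_2$ such that in every row, the sum of entries in $Q_1$-columns minus the sum of entries in $Q_2$-columns lies in $\{-1,0,1\}$. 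The natural choice is to put all $W^{(4)}$ columns in one part and all $W^{(2)},W^{(6)}$ columns in the other: then a (\ref{eq:C2})-row contributes $(+1)+(-1)=0$, and a (\ref{eq:C1})-row contributes, with the $W^{(4)}$ entry having opposite sign, $(+1)+(-1)-(+1)=-1$ — all within $\{-1,0,1\}$. The bound rows have a single nonzero, so they trivially satisfy the condition. Since this works for an arbitrary subset of columns if the partition is chosen on that subset (one checks the same sign-cancellation still holds when some columns are deleted), the Ghouila-Houri criterion applies and the matrix is totally unimodular. Alternatively, I would note that the $W^{(2)}$-difference structure is exactly the incidence matrix of a path, and adjoining the $W^{(4)}$ and $W^{(6)}$ variables keeps the matrix a network matrix; either framing closes the argument.

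The main obstacle is bookkeeping rather than mathematics: one must be careful that the index shift in (\ref{eq:C1}) (it ranges over $j\in J\setminus\{1\}$ while (\ref{eq:C2}) ranges over all $j\in J$) is handled so that the claimed column partition genuinely yields row sums in $\{-1,0,1\}$ for every row simultaneously, and that the sign and bound constraints in (\ref{eq:C3}) are included in the matrix in a way consistent with the standard definition of total unimodularity (which concerns all square submatrices, so appending identity-type rows is harmless). Once the explicit matrix is on the page, verifying the Ghouila-Houri column partition is a short and routine check, and the proposition follows.
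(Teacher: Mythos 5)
Your target structure is right and your check on the full column set is correct, but the step where you extend it to arbitrary column subsets is where the argument breaks. The Ghouila--Houri criterion requires that \emph{every} subset of columns admit a valid two-part partition, and that partition may (and here must) depend on the subset; your claim that the fixed partition $Q_1=\{W^{(2)}\text{- and }W^{(6)}\text{-columns}\}$, $Q_2=\{W^{(4)}\text{-columns}\}$ restricts correctly to any subset is false. Concretely, take the subset $S=\{W^{(2)}(T_j),\,W^{(4)}(T_j)\}$ and the row of (\ref{eq:C1}) indexed by $j$, which in your sign convention carries $-1$ on $W^{(2)}(T_j)$ and $+1$ on $W^{(4)}(T_j)$: the induced partition gives a difference of $(-1)-(+1)=-2\notin\{-1,0,1\}$. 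The cancellation you rely on comes from the \emph{pair} of $W^{(2)}$ entries in a (\ref{eq:C1})-row, and it disappears as soon as one of them is deleted, so ``the same sign-cancellation still holds when some columns are deleted'' is precisely the point that fails. Since the existence of a valid partition for every subset is \emph{equivalent} to total unimodularity, this cannot be left as a routine check with one global partition; as written, the proof establishes only the full-set case.

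The repair is the route you mention only in passing at the end, and it is the one the paper actually takes. Restricted to the $W^{(2)}$ and $W^{(6)}$ columns, every row of (\ref{eq:C1}) and (\ref{eq:C2}) has exactly two nonzero entries, one equal to $+1$ and one equal to $-1$; such a matrix is (the transpose of) the node--arc incidence matrix of a directed graph and is therefore totally unimodular. Each $W^{(4)}$ column then has a single nonzero entry $\pm1$, since $W^{(4)}(T_j)$ appears only in the single constraint of (\ref{eq:C1}) with its own index, and appending a signed unit column to a totally unimodular matrix preserves total unimodularity (expand any square submatrix along that column); the unit rows coming from (\ref{eq:C3}) are absorbed the same way. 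If you insist on Ghouila--Houri, you would need to exhibit a valid partition for each subset separately, which amounts to reproving this incidence-matrix decomposition in disguise.
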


\begin{proof}Let us write the matrix of inequality constraints (\ref{eq:C1})
and (\ref{eq:C2}) as 
\begin{equation}
\left[A_{1}A_{2}A_{3}\right]\left[\begin{array}{c}
W^{\left(2\right)}\\
W^{\left(6\right)}\\
W^{\left(4\right)}
\end{array}\right]\leq0
\end{equation}
for some block matrices $A_{1}$, $A_{2}$, and $A_{3}$. We will
first show that matrix $\left[A_{1}A_{2}\right]$ is totally unimodular.
Note that all entries are $\left\{ -1,0,1\right\} $. Moreover, each
row contains exactly two non-zero entries. One of the entries is $1$
and the other is $-1$. These are sufficient conditions for matrix
$\left[A_{1}A_{2}\right]$ to be totally unimodular. It is trivial
to see that $A_{3}=P\left[\begin{array}{cc}
I & 0\\
0 & 0
\end{array}\right]Q$ for some permutation matrices $P$ and $Q$ of the appropriate size.
This implies that matrix $\left[A_{1}A_{2}A_{3}\right]$ is totally
unimodular. The bound constraints (\ref{eq:C3}) can be included by
using a similar argument.\end{proof}

By the virtue of Proposition \ref{prop:TU}, we can relax the binarity
constraints and reformulate Problem (\ref{eq:mg}) as an linear programming
problem as
\begin{equation}
\begin{array}{rl}
\underset{W^{\left(2\right)},W^{\left(4\right)},W^{\left(6\right)}}{\max} & \sum_{j\in J}\widehat{W}\left(T_{j}\right)\overline{P}\left(T_{j}\right)-W^{\left(4\right)}\left(T_{j}\right)s\\
\\
\text{s.t.} & W^{\left(4\right)}\left(T_{j}\right)\geq W^{\left(2\right)}\left(T_{j}\right)-W^{\left(2\right)}\left(T_{j-1}\right),\;\forall j\in J\backslash\left\{ 1\right\} \\
\\
 & W^{\left(6\right)}\left(T_{j}\right)\leq W^{\left(2\right)}\left(T_{j}\right),\;\forall j\in J\\
\\
 & W^{\left(k\right)}\left(T_{j}\right)\in\left[0,1\right],\; k\in\left\{ 2,4,6\right\} ,\:\forall j\in J.
\end{array}\label{eq:mg-1}
\end{equation}
A combination of a particle swarm algorithm \cite{vaz2007aparticle}
and Gurobi \cite{gurobioptimization2014gurobioptimizer} was used
to solve the bi-level optimization problem (\ref{eq:mg-1}) in practice.
Particle swarm was applied to the outer and Gurobi to the inner optimization
problem.

For each power plant we used over 5000 training samples obtained from
the period between 1/1/2012 and 1/1/2013.

\subsection{UK power grid}

In this section we apply our model to the entire system of the UK
power plants. We focus on the coal, gas, and oil power plants, because
these power plants adapt their production to cover the changes in
demand and are thus responsible for setting the price. Nuclear power
plants do not have to be modeled explicitly because their ramp-up
and ramp-down constraints are so tight that their production is almost
constant over time. They usually deviate from the maximum production
only for maintenance reasons. Renewable sources and interconnectors
are not modeled explicitly, because they require a different treatment
not covered in this paper. In this section, we define demand $D\left(T_{j}\right)$
for all $j\in J$ as
\begin{equation}
D\left(T_{j}\right):=D_{act}\left(T_{j}\right)-P_{renw}\left(T_{j}\right)-P_{inter}\left(T_{j}\right)
\end{equation}
where $D_{act}\left(T_{j}\right)$ denotes the actual demand in the
UK power system, $P_{renw}\left(T_{j}\right)$ denotes the production
from all renewable sources including wind, solar, biomass, hydro and
pumped storage, and $P_{inter}\left(T_{j}\right)$ denotes the inflow
of power into the UK power system through interconnectors. To make
this model useful in practice one has to model each of these terms,
but this exceeds the scope of this paper.

Our goal is to calculate the electricity spot price with the information
available on 11/2/2013. We are interested in a delivery period from
4/4/2013 00:00:00 to 8/4/2013 00:00:00. We assume that there are two
types of power contract available. The first is a month ahead contract
traded on 15/3/2013 17:00:00 and covers the delivery over all four
days. The second type is a spot contract that requires an immediate
delivery and is traded for each half hour separately. We use future
prices of coal, gas, and oil as available on 11/2/2013. Since the
historical demand forecast is not available, we used the realized
demand instead, which is a standard practice in the literature. To
use this model in practice, one could use a demand forecast available
at the Elexon webpage%
\footnote{http://www.bmreports.com/%
} or develop a new approach. Since we do not have the information about
the ownership of the power plants, we assumed that there is only one
producer who owns all power plants connected to the UK grid and only
one consumer that is responsible for satisfying the demand of the
end users. In reality, market participants have more information about
the ownership that can be incorporated into the model. We set $\lambda_{k}=10^{-7}$
for all $k\in P\cup C$. The impact of the risk aversion of producers
and consumers is thoroughly investigated in \cite{troha2014calculation}.
As described in the previous section, we estimated parameters $c$,
$g$, $m$, and $s$ for each power plant from 5000 training samples
obtained in the period between 1/1/2012 and 1/1/2013.

To motivate the inclusion of startup costs we first investigate a
simplified version of our model described in Section \ref{sec:Problem-description}
and neglect the startup costs. Figure \ref{fig:f8} shows the output
of our model, when all startup costs are set to zero. The figure on
the left hand side depicts the calculated energy mix between coal
and gas power plants, while the figure on the right hand side depicts
the actually observed energy mix. Both figures contain also the spot
price calculated by our model and the actually observed spot price.
The difference between calculated and observed production for each
fuel is depicted in Figure \ref{fig:f5_1}. We can see that our model
predicts the energy mix very closely. Moreover, the daily pattern
of the electricity price predicted by our model is similar to the
actually observed one. The model correctly predicted that the electricity
price is higher during the peak hours than during the off peak hours.
Furthermore, the calculated electricity price has two daily peaks
that occur at almost the same time as in the historically observed
price. 

The graphs also reveal a few problems of our model. Firstly, we can
see that our model underestimates spot prices during peak hours and
overestimates them during the off-peak hours. A similar results was
also found in \cite{harvey2002marketpower}. Secondly, the two spikes
in the observed price are not captured in our model. This motivated
us to extend our model and incorporate the startup costs of the power
plants. For the purpose of calibration, we applied the approach described
in Section \ref{sub:Estimation-of-parameters}. 

\begin{figure}
\begin{centering}
\includegraphics[bb=35bp 180bp 545bp 600bp,clip,scale=0.43]{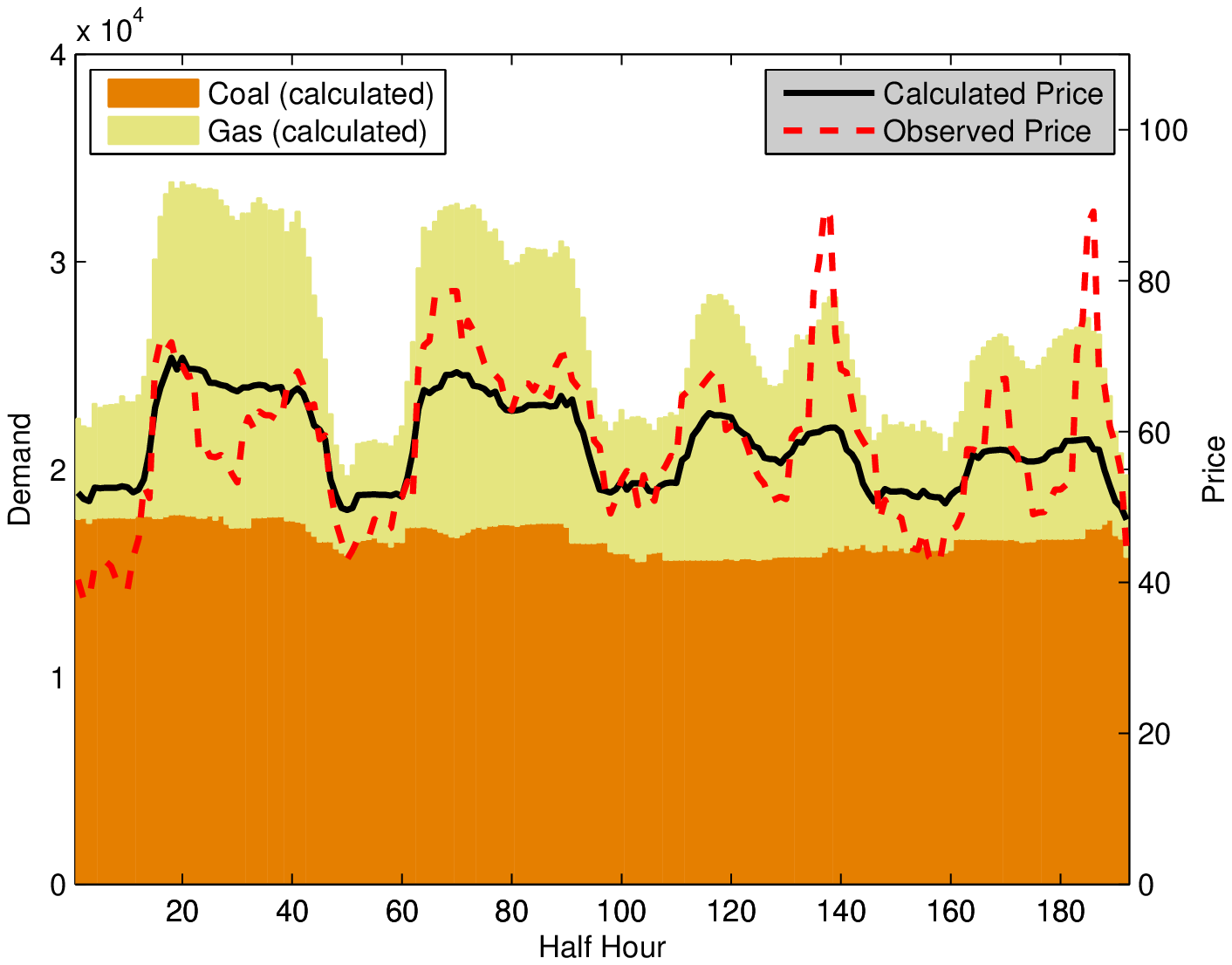}\includegraphics[bb=35bp 180bp 545bp 600bp,clip,scale=0.43]{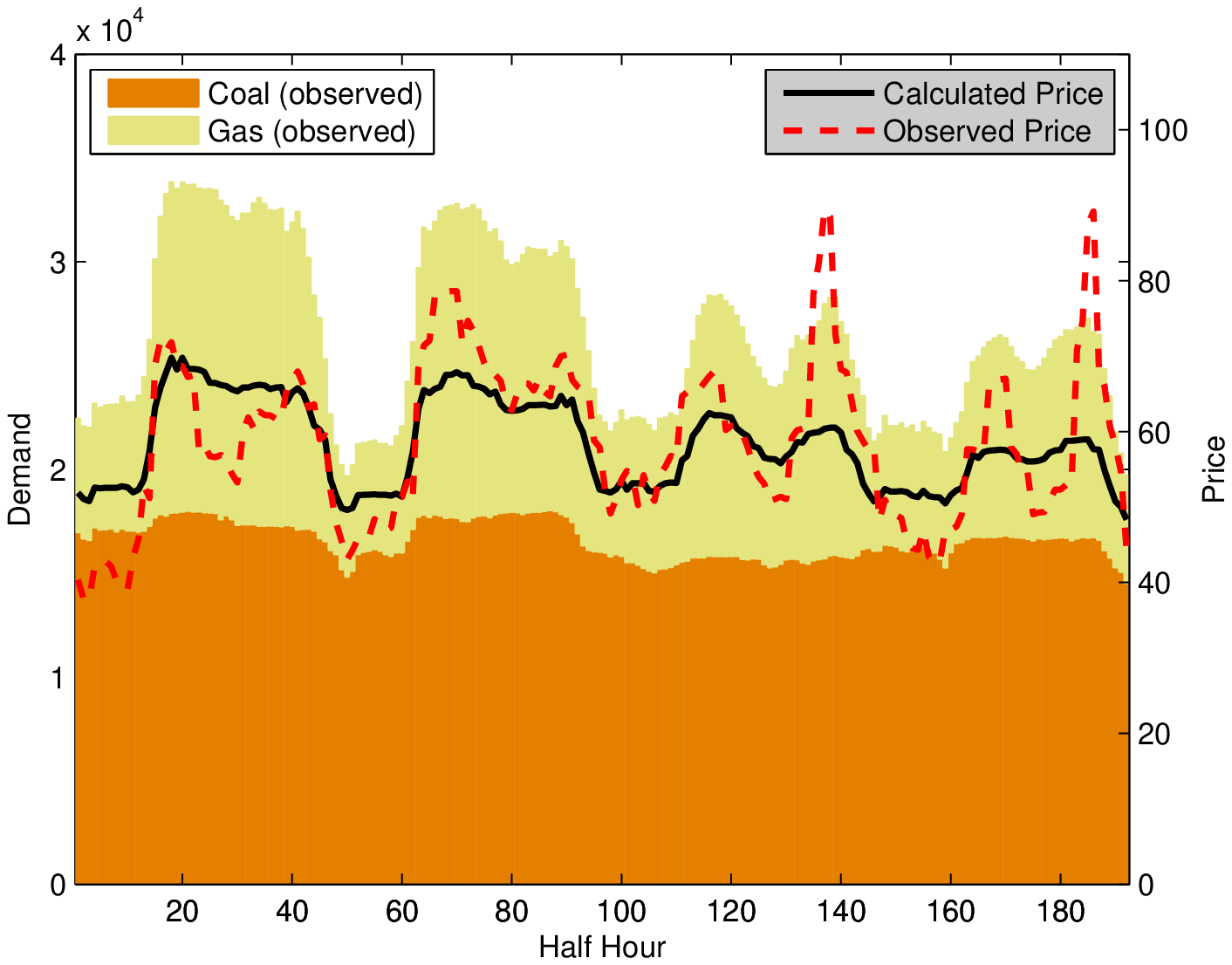}
\par\end{centering}

\caption{\label{fig:f8}Comparison of the calculated and historical electricity
price and energy mix when startup costs are excluded (i.e. set to
zero). }
\end{figure}

\begin{figure}
\begin{centering}
\includegraphics[scale=0.43]{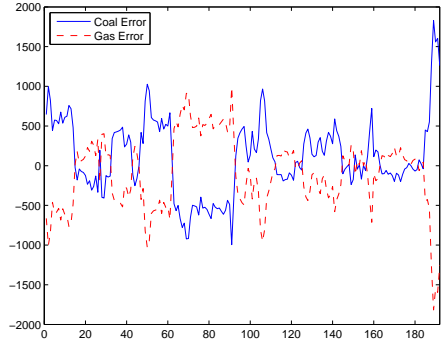}
\par\end{centering}

\caption{\label{fig:f5_1}The difference between calculated and observed gas
and coal production.}

\end{figure}

Calculated equilibrium prices and the energy mix with startup costs
included are depicted in Figure \ref{fig:f8-1}. By comparing Figure
\ref{fig:f8} and Figure \ref{fig:f8-1}, we can see that the calculated
equilibrium price captures the daily variations of the actually observed
price much more closely. It correctly predicts some of the spikes,
but also forecasts many false positives. Figure \ref{fig:f5_1-1}
shows that the inclusion of startup costs slightly improved the error
in the energy mix calculation.

\begin{figure}
\begin{centering}
\includegraphics[bb=35bp 180bp 545bp 600bp,clip,scale=0.43]{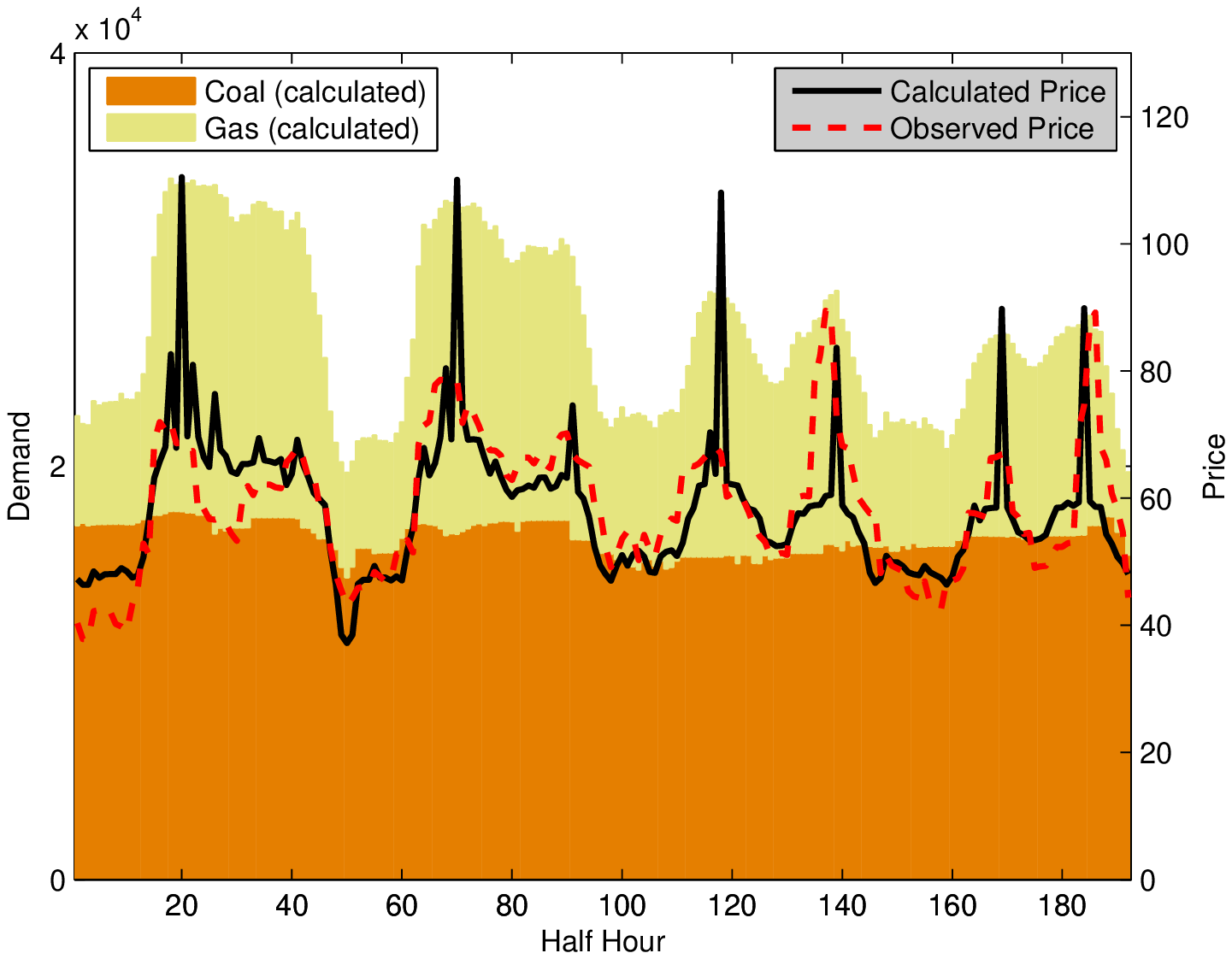}\includegraphics[bb=35bp 180bp 545bp 600bp,clip,scale=0.43]{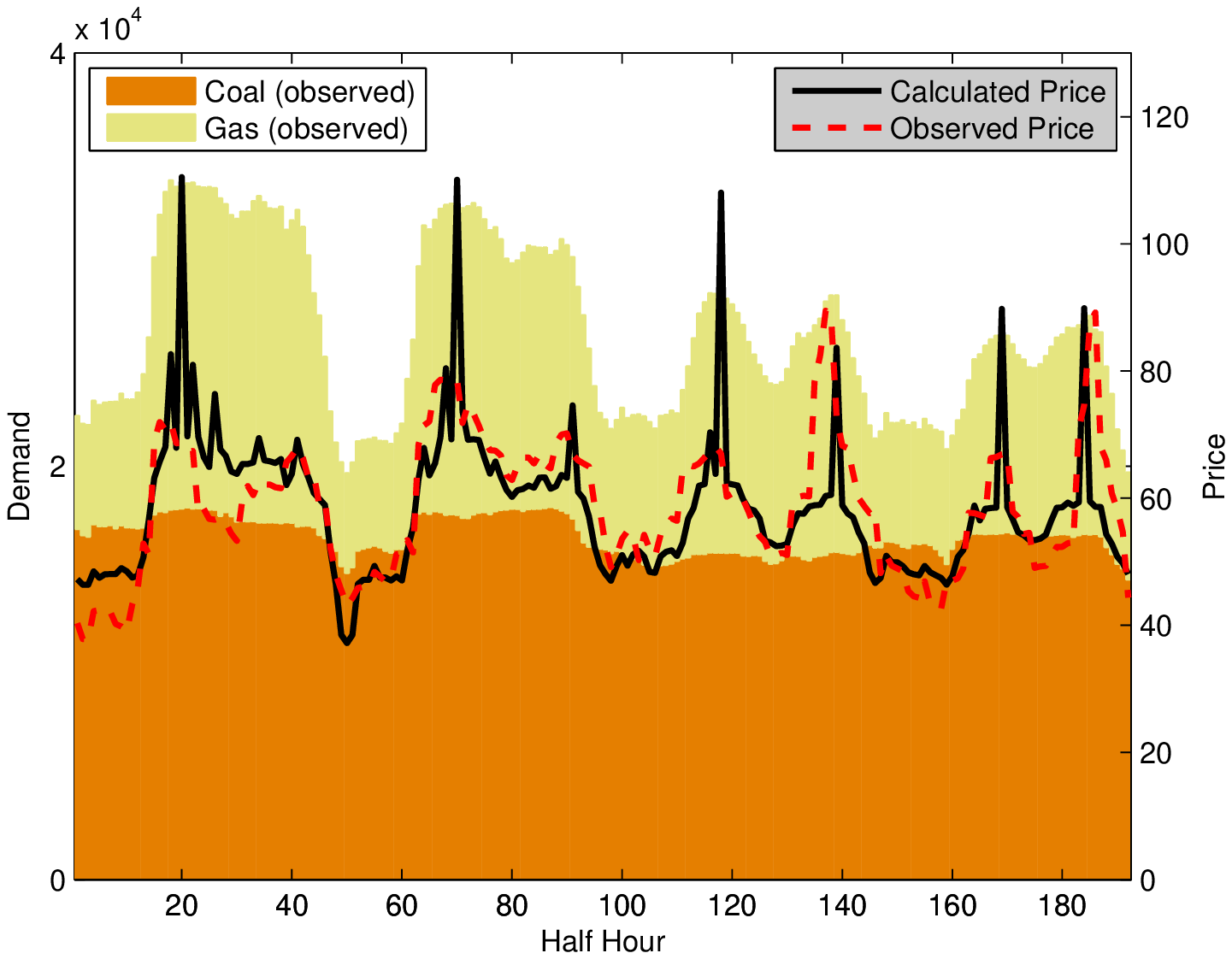}
\par\end{centering}

\caption{\label{fig:f8-1}Comparison of the calculated and historical electricity
price and energy mix when startup costs are included.}
\end{figure}
\begin{figure}
\begin{centering}
\includegraphics[scale=0.43]{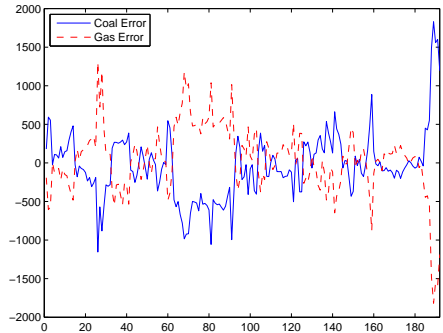}
\par\end{centering}

\caption{\label{fig:f5_1-1}The difference between calculated and observed
gas and coal production after including startup costs.}
\end{figure}

It is interesting to explore the conditions of the electricity grid
at times when the spikes in the electricity price occur. A very descriptive
parameter is standing reserve $SR\left(T_{j}\right)$, $j\in J$,
defined as
\begin{equation}
SR\left(T_{j}\right)=\sum_{p\in P}\sum_{l\in L}\sum_{r\in R^{p,l}}\left[W_{p,l,r}^{\left(2\right)}\left(T_{j}\right)-W_{p,l,r}^{\left(6\right)}\left(T_{j}\right)\right]\left[\overline{W}_{max}^{p,l,r}\left(T_{j}\right)-\overline{W}_{min}^{p,l,r}\left(T_{j}\right)\right],
\end{equation}
which quantifies by how much the power plants that are currently running
can increase their production before a new power plant must be turned
on. Since most of the power plants have severe constraints on startup
times, low standing reserve usually implies low stability of the electricity
grid.

Figure \ref{fig:f8-1-1} depicts the calculated standing reserve over
the relevant time period. We can see that all price spikes occur when
standing reserve is close to zero. In such situations, a new power
plant must be turned on (and off quickly afterwards) to cover the
temporary extra demand. Thus, the startup costs are spread over a
very short period of time, and a high electricity price is required
for such an action to be profitable. However, in reality, the times
of a low standing reserve are very rare. The grid operator is responsible
for providing a reliable electricity delivery and preventing times
with a low standing reserve. This is achieved by incentivizing some
of the power plants to start production even when it is not profitable
for them. The costs of such actions are distributed among all market
participants. How to include the grid operator in our model is discussed
in the next section.

\begin{figure}
\begin{centering}
\includegraphics[bb=35bp 180bp 545bp 600bp,clip,scale=0.43]{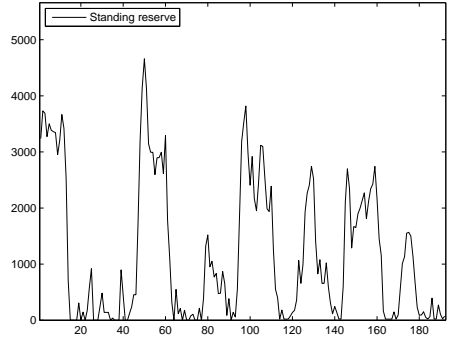}
\par\end{centering}

\caption{\label{fig:f8-1-1}Standing reserve for the relevant time period.}
\end{figure}

In the remaining part of this section, we evaluate the error caused
by using the continuous relaxation of Problem (\ref{eq:opt_prod}).
We follow the procedure described in Section \ref{sec:Algorithm}.
The MIQP error is depicted by a dashed line in Figure \ref{fig:MIQP1}.
To estimate the effect of numerical errors, we also calculated the
QP error which is shown in Figure \ref{fig:MIQP1} as a solid line. 

\begin{figure}
\begin{centering}
\includegraphics[bb=35bp 180bp 545bp 600bp,clip,scale=0.43]{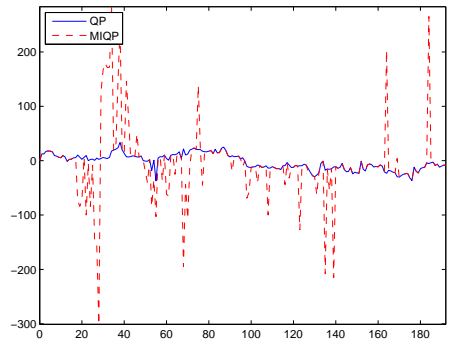}
\par\end{centering}

\caption{\label{fig:MIQP1}Error caused by considering Problem \ref{eq:opt_prod_rel}
instead of Problem \ref{eq:opt_prod}.}
\end{figure}
We can see from Figure \ref{fig:MIQP1} that $\left\Vert \sum_{c\in C}V_{c}^{*}+\sum_{p\in P}V_{p}^{*}\right\Vert _{\infty}\approx400\:\text{MWh}$.
Also in reality, production and consumption do not match exactly.
The mismatch is reflected through changes in the power line frequency.
In the UK, the nominal power line frequency is 50 Hz. The grid operator,
called National Grid, is responsible for keeping the frequency within
$\pm1\%$%
\footnote{See http://www2.nationalgrid.com/uk/services/balancing-services/frequency-response/.%
} of the nominal power line frequency. We can see from Figure \ref{fig:MIQP1}
that the largest errors occur at times when demand is high. Since
the overall demand for electricity during the peak hours is approximately
$40\:\text{GW}$ we can conclude that the error is within $\pm1\%$
error bound.

The model presented in this paper neglects the losses of electricity
in transmission and distribution lines. According to the World Bank%
\footnote{See http://data.worldbank.org/indicator/EG.ELC.LOSS.ZS/countries/GB?display=graph.%
} the transmission and distribution losses in the UK account for approximately
7.5\% (maximum 8.5\% in 2004 and minimum 7.0\% in 2010) of the total
electricity production. The losses vary in time and can change for
$\pm1\%$.

Due to the reasons above, we believe that for the purpose of modeling
realistic power prices, it is enough to consider the continuous relaxation
of Problem (\ref{eq:opt_prod}) and neglect binarity constraints.

\section{Grid operator\label{sec:Grid-operator}}

In Section \ref{sec:Problem-description}, we investigated how to
include startup costs in our model. The calculated equilibrium price
contained many spikes, which are in reality prevented by intervention
of the grid operator. In times, when the standing reserve is low,
the grid operator incentivizes additional power plants to turn on
and thus help making the delivery of electricity more reliable. In
this section we investigate how to incorporate the actions of the
grid operator into our model.

\subsection{Quadratic programming formulation}

The costs of the grid operator's actions that help to maintain a high
reliability of the delivery of electricity are distributed among all
market participants. All market participants are collectively penalized
in the situations when the standing reserve is low. To include the
penalization in our model, we propose a quadratic penalty function
$\Upsilon\left(SR\left(T_{j}\right)\right)$ defined as
\begin{equation}
\Upsilon\left(SR\left(T_{j}\right)\right):=\alpha\left(\max\left\{ 0,\beta-SR\left(T_{j}\right)\right\} \right)^{2},\label{eq:penalty}
\end{equation}
where $\alpha>0$ and $\beta>0$ are used to describe a risk aversion
of the grid operator. Parameter $\beta$ tells us at what level of
the standing reserve does the grid operator start to take action.
Parameter $\alpha$ tells us how much is the grid operator willing
to incentivize the power plant to start production.

One can incorporate the grid operator into Problem (\ref{eq:op_all-2})
as 
\begin{equation}
\begin{array}{cl}
\underset{x}{\text{max}} & -\pi^{\top}x-\frac{1}{2}x^{\top}Qx-\sum_{j\in J}\Upsilon\left(SR\left(T_{j}\right)\right)\\
\\
\text{s.t.} & Ax=a\\
\\
 & Bx\leq b\\
\\
 & \mu_{M}=0
\end{array}\label{eq:op_all-2-1}
\end{equation}

It might not be immediately clear, how to write the penalty term (\ref{eq:penalty})
in a quadratic programming framework. We can follow an approach that
is widely used in the linear programming literature and introduce
a decision variable $z\left(T_{j}\right)$ with the following constraints
\begin{equation}
\begin{array}{rcl}
z\left(T_{j}\right) & \geq & 0\\
\\
z\left(T_{j}\right) & \geq & \beta-SR\left(T_{j}\right),
\end{array}
\end{equation}
which hold for each $j\in J$. The penalty function $\Upsilon\left(SR\left(T_{j}\right)\right)$
can be written as a function of $z\left(T_{j}\right)$ as 
\begin{equation}
\Upsilon\left(z\left(T_{j}\right)\right):=\alpha z\left(T_{j}\right)^{2},\label{eq:penalty-1}
\end{equation}
which fits into the quadratic programming framework.

One can apply the procedure described in Section \ref{sec:Algorithm}
to find a more convenient dual formulation of Problem (\ref{eq:op_all-2-1}).

\subsection{Numerical results\label{sub:N2}}

In this section we investigate numerical results after inclusion of
the grid operator. Calculated equilibrium prices and the energy mix
and depicted in Figure \ref{fig:f8-1-2}. The figure on the left hand
side depicts the calculated energy mix between coal and gas power
plants, while the figure on the right hand side depicts the actually
observed energy mix. Both figures contain also the spot price calculated
by our model and the actually observed spot price. We set $\alpha=0.01$
and $\beta=1500$. Determination of the optimal standing reserve is
a challenging problem, which has received a lot of attention in the
literature (see \cite{ela2010evolution} and \cite{devos2014dynamic}
for example) and exceeds the scope this paper.

\begin{figure}
\begin{centering}
\includegraphics[bb=35bp 180bp 545bp 600bp,clip,scale=0.43]{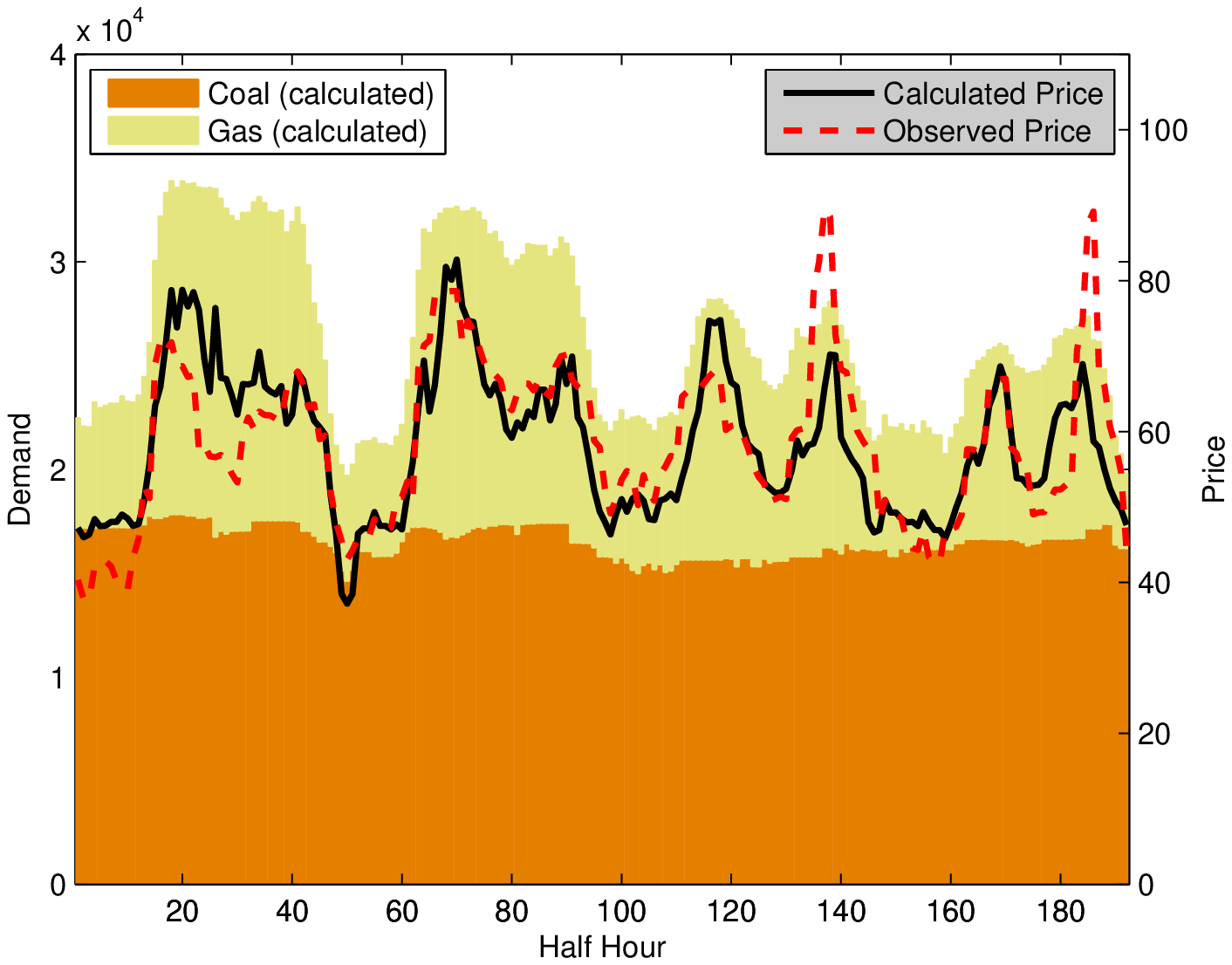}\includegraphics[bb=35bp 180bp 545bp 600bp,clip,scale=0.43]{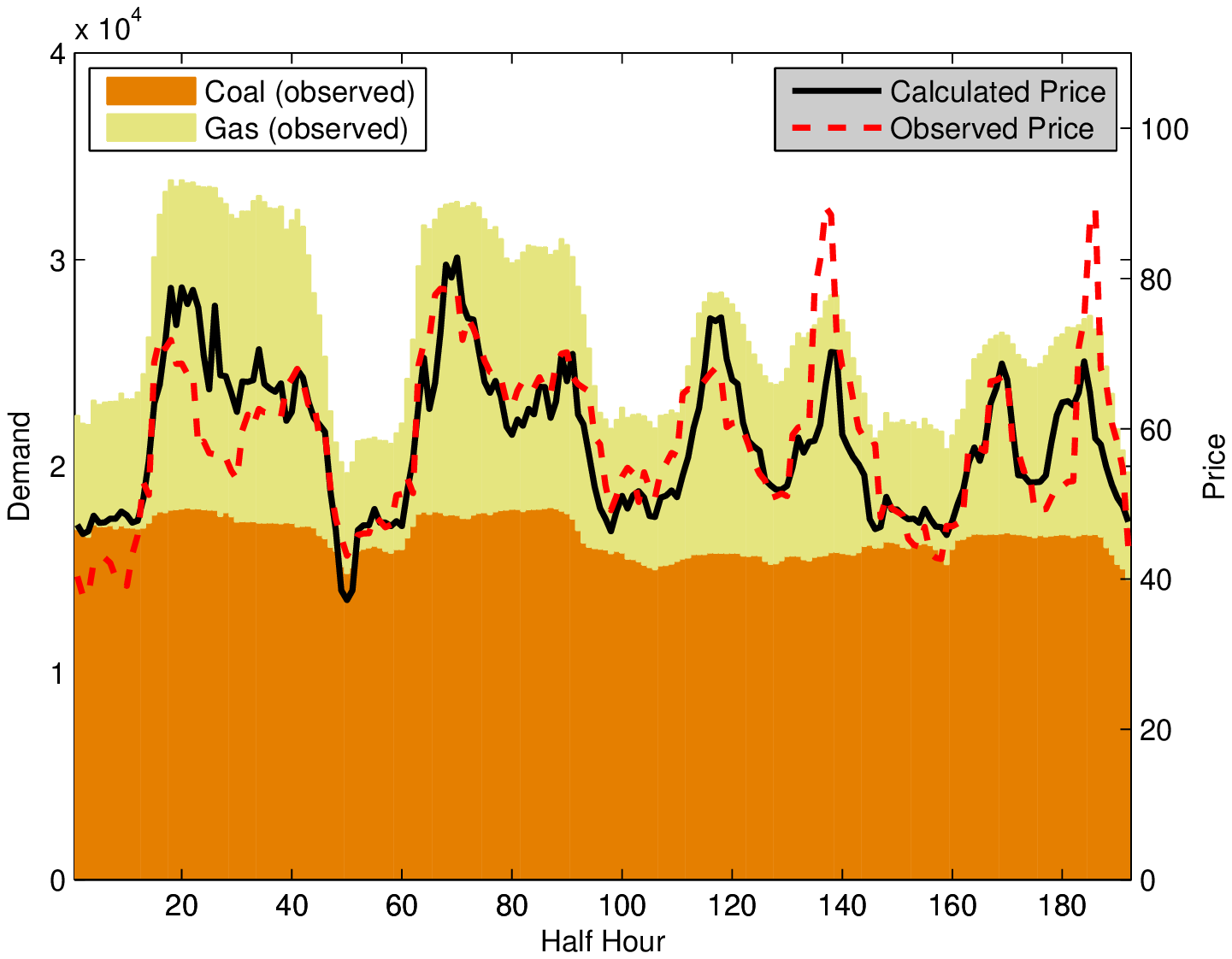}
\par\end{centering}

\caption{\label{fig:f8-1-2}Comparison of the calculated and historical electricity
price and energy mix with startup costs and the grid operator included.}
\end{figure}
\begin{figure}
\begin{centering}
\includegraphics[scale=0.43]{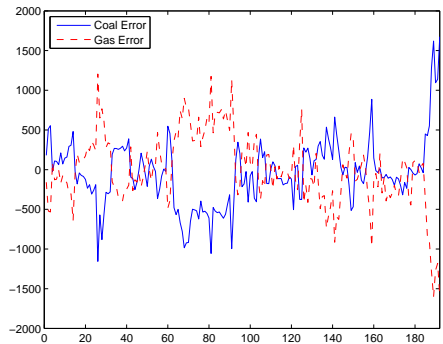}
\par\end{centering}

\caption{\label{fig:f5_1-1-1}The difference between calculated and observed
gas and coal production after including the grid operator.}
\end{figure}
By comparing Figure \ref{fig:f8-1} and Figure \ref{fig:f8-1-2},
we can see that the calculated equilibrium electricity price in Figure
\ref{fig:f8-1-2} follows the daily variations much more closely.
The calculated equilibrium electricity price does not contain any
spikes, because the grid operator prevented them by managing the standing
reserve. In our model, we assume that the players (and the grid operator)
have a perfect demand forecast. However, in reality this is usually
not the case. The grid operator is not able to predict the demand
perfectly, and corrective actions are often required. When large corrective
action is required at times close to delivery, then only a few (usually
rather inefficient Open Cycle Gas Turbine) power plants are flexible
enough to cover the demand, which causes spikes in the electricity
price. Modeling of recursive actions exceeds the scope of this paper
and is left for future work. 

Figure \ref{fig:f5_1-1-1} shows that the inclusion of the grid operator
did not have any significant impact on the error in the energy mix.

Figure \ref{fig:f8-1-1-1} shows the standing reserve after inclusion
of the grid operator. The standing reserve never reaches zero since
the grid operator prevents this by requiring new power plants to start
production to ensure stability of the electricity grid. This makes
the spot price smoother and significantly decreases the number of
spikes. 

\begin{figure}
\begin{centering}
\includegraphics[bb=35bp 180bp 545bp 600bp,clip,scale=0.43]{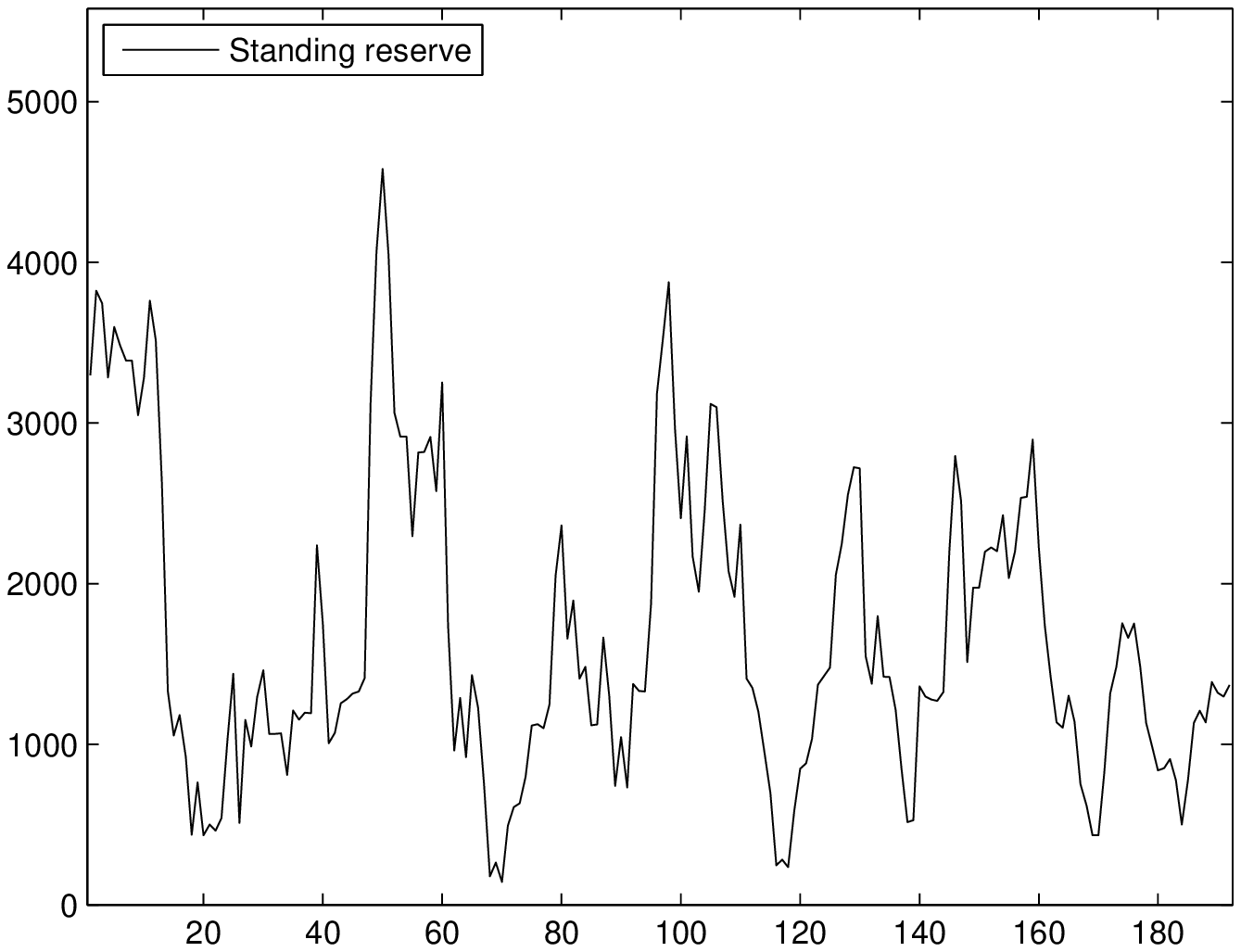}
\par\end{centering}

\caption{\label{fig:f8-1-1-1}Standing reserve for the relevant time period.}
\end{figure}

Figure \ref{fig:MIQP2} depicts the MIQP and QP errors after inclusion
of the grid operator. By comparing Figure \ref{fig:MIQP1} and Figure
\ref{fig:MIQP2}, we can see that the inclusion of the grid operator
has a small impact on the errors, which remained within $\pm1\%$
error bound.

\begin{figure}
\begin{centering}
\includegraphics[bb=35bp 180bp 545bp 600bp,clip,scale=0.43]{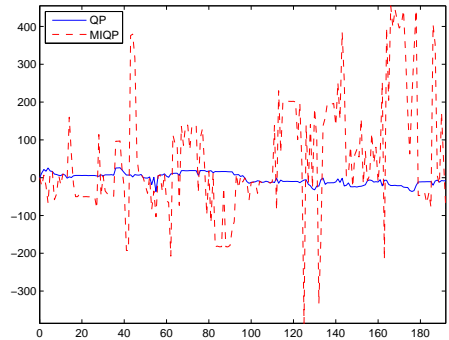}
\par\end{centering}

\caption{\label{fig:MIQP2}Error caused by considering Problem \ref{eq:opt_prod_rel}
instead of Problem \ref{eq:opt_prod}.}

\end{figure}

\section{Conclusions\label{sec:Conclusions}}

In this paper we proposed a tractable quadratic programming formulation
for calculating the equilibrium term structure of electricity prices
when the startup costs of power plants are included in the model.
Through numerical simulations we showed that startup costs have a
large impact on electricity prices. When startup costs are included
in the model, the calculated spot electricity price during peak hours
increased and during off-peak hours decreased. Moreover, startup costs
are responsible for introducing frequent high spikes in the spot electricity
price.

We observed that price spikes occur at times when the standing reserve
in low. In reality, the times of a low standing reserve are rare,
because of the intervention of the grid operator, who is responsible
for providing a reliable electricity delivery and preventing times
with a low standing reserve. We included the grid operator in our
model in the second part of the paper. This significantly decreased
the number of spikes. Moreover, the computed equilibrium electricity
prices matched the historically observed prices very closely. 

Numerical simulations were performed by modeling the realistic UK
power grid consisting of a few hundred power plants. A tractable approach
to estimate startup costs of power plants from their historical production
was also proposed. 

\bibliographystyle{siam}
\bibliography{mat_fin_bib,transfer}

\begin{thebibliography}{10}

\bibitem{barlow2002adiffusion}
{\sc M.~T. Barlow}, {\em A diffusion model for electricity prices},
  Mathematical Finance, 12 (2002), pp.~287--298.

\bibitem{bessembinder2002equilibrium}
{\sc Hendrik Bessembinder and Michael~L. Lemmon}, {\em Equilibrium pricing and
  optimal hedging in electricity forward markets}, Journal of Finance, 57
  (2002), pp.~1347--1382.

\bibitem{buhler2009riskpremia}
{\sc Wolfgang B{\"u}hler}, {\em Risk premia of electricity futures: A dynamic
  equilibrium model}, in Risk Management in Commodity Markets, John Wiley \&
  Sons, Ltd., 2009, pp.~61--80.

\bibitem{buhler2009valuation}
{\sc Wolfgang B{\"u}hler and Jens M{\"u}ller-Merbach}, {\em Valuation of
  electricity futures: Reduced-form vs. dynamic equilibrium models}, Mannheim
  Finance Working Paper No. 2007-07,  (2009).

\bibitem{carmona2013electricity}
{\sc Ren{\'e} Carmona, Michael Coulon, and Daniel Schwarz}, {\em Electricity
  price modeling and asset valuation: a multi-fuel structural approach},
  Mathematics and Financial Economics, 7 (2013), pp.~167--202.

\bibitem{clewlow1999amultifactor}
{\sc Les Clewlow and Chris Strickland}, {\em A multi-factor model for energy
  derivatives}, Research Paper Series~28, Quantitative Finance Research Centre,
  University of Technology, Sydney, Dec. 1999.

\bibitem{clewlow1999valuing}
\leavevmode\vrule height 2pt depth -1.6pt width 23pt, {\em Valuing energy
  options in a one factor model fitted to forward prices}, Research Paper
  Series~10, Quantitative Finance Research Centre, University of Technology,
  Sydney, Apr. 1999.

\bibitem{demaeredaertrycke2012liquidity}
{\sc Gauthier De~Maere~d'Aertrycke and Yves Smeers}, {\em Liquidity Risks on
  Power Exchanges: a Generalized Nash Equilibrium model}, 2012.

\bibitem{devos2014dynamic}
{\sc K.~De~Vos and J.~Driesen}, {\em Dynamic operating reserve strategies for
  wind power integration}, Renewable Power Generation, {IET}, 8 (2014),
  pp.~598--610.

\bibitem{ela2010evolution}
{\sc E.~Ela, B.~Kirby, E.~Lannoye, M.~Milligan, D.~Flynn, B.~Zavadil, and
  M.~O'Malley}, {\em Evolution of operating reserve determination in wind power
  integration studies}, in Power and Energy Society General Meeting, 2010
  {IEEE}, July 2010, pp.~1--8.

\bibitem{garcia2011estimation}
{\sc Isabel Garc{\'i}a, Claudia Kl{\"u}ppelberg, and Gernot M{\"u}ller}, {\em
  Estimation of stable {CARMA} models with an application to electricity spot
  prices}, Statistical Modelling, 11 (2011), pp.~447--470.

\bibitem{gribik2007marketclearing}
{\sc Paul~R. Gribik, William~W. Hogan, and Susan~L. Pope}, {\em Market-clearing
  electricity prices and energy uplift}, technical report, Harvard University,
  Cambridge, {MA}, Dec. 2007.

\bibitem{gurobioptimization2014gurobioptimizer}
{\sc Inc. Gurobi~Optimization}, {\em Gurobi Optimizer Reference Manual}, 2014.

\bibitem{hambly2009modelling}
{\sc Ben Hambly, Sam Howison, and Tino Kluge}, {\em Modelling spikes and
  pricing swing options in electricity markets}, Quantitative Finance, 9
  (2009), pp.~937--949.

\bibitem{harvey2002marketpower}
{\sc Scott~M. Harvey and William~W. Hogan}, {\em Market power and market
  simulations}, technical report, Center for Business and Government, Harvard
  University, Cambridge, {MA}, July 2002.

\bibitem{howison2009stochastic}
{\sc Sam Howison and Michael~C. Coulon}, {\em Stochastic behaviour of the
  electricity bid stack: From fundamental drivers to power prices}, The Journal
  of Energy Markets, 2 (2009).

\bibitem{lucia2000electricity}
{\sc Julio~J. Lucia and Eduardo Schwartz}, {\em Electricity prices and power
  derivatives: Evidence from the nordic power exchange},  (2000).

\bibitem{martinez2008amethodology}
{\sc D.~Martinez}, {\em A methodology for the consideration of start-up costs
  into the marginal cost estimated with production cost models}, in Electricity
  Market, 2008. {EEM} 2008. 5th International Conference on European, May 2008,
  pp.~1--10.

\bibitem{meyer-brandis2008multifactor}
{\sc Thilo Meyer-Brandis and Peter Tankov}, {\em Multi-factor jump-diffusion
  models of electricity prices}, International Journal of Theoretical and
  Applied Finance ({IJTAF}), 11 (2008), pp.~503--528.

\bibitem{troha2014calculation}
{\sc M.~Troha and R.~Hauser}, {\em Calculation of a power price equilibrium},
  {ArXiv} e-prints,  (2014).

\bibitem{troha2014theexistence}
\leavevmode\vrule height 2pt depth -1.6pt width 23pt, {\em The existence and
  uniqueness of a power price equilibrium}, {ArXiv} e-prints,  (2014).

\bibitem{vaz2007aparticle}
{\sc A.IsmaelF. Vaz and Lu{\'i}sN. Vicente}, {\em A particle swarm pattern
  search method for bound constrained global optimization}, Journal of Global
  Optimization, 39 (2007), pp.~197--219.

\bibitem{zhang2009onreducing}
{\sc Bingjie Zhang, P.B. Luh, E.~Litvinov, Tongxin Zheng, and Feng Zhao}, {\em
  On reducing uplift payment in electricity markets}, in Power Systems
  Conference and Exposition, 2009. {PSCE} '09. {IEEE}/{PES}, Mar. 2009,
  pp.~1--7.

\end{thebibliography}

\end{document}